\documentclass[12pt]{article}

\usepackage[pdftex]{graphicx}
\usepackage{amssymb}
\usepackage{amsfonts}
\usepackage{graphicx}
\usepackage[margin=2.7cm]{geometry}
\usepackage{amsmath}
\usepackage{euscript}
\usepackage{enumerate}
\usepackage{amsthm}
\usepackage{setspace} 
\usepackage{hyperref}

\newtheorem{theorem}{Theorem}[section]
\newtheorem{definition}[theorem]{Definition}
\newtheorem{lemma}[theorem]{Lemma}

\newtheorem{proposition}[theorem]{Proposition}
\newtheorem{example}[theorem]{Example}

\newtheorem{remark}[theorem]{Remark}

\numberwithin{equation}{section}
\def\unnumfootnote{\xdef\@thefnmark{}\@footnotetext}

\newcommand{\Rmnum}

\title{Hyperfunctions and Spectral Zeta Functions of Laplacians on Self-Similar Fractals}

\author{Nishu Lal, Michel L. Lapidus \\
\small Department of Mathematics, University of California, Riverside, CA 92521-0135, USA\\
\small email: nishul@math.ucr.edu, lapidus@math.ucr.edu\\
}

\begin{document}
\maketitle

\begin{abstract}
\noindent We investigate the spectral zeta function of fractal differential operators such as the Laplacian on the unbounded (i.e., infinite) Sierpinski gasket and a self-similar Sturm--Liouville operator associated with a fractal self-similar measure on the half-line.  In the latter case, C. Sabot discovered the relation between the spectrum of this operator and the iteration of a rational map of several complex variables, called the renormalization map.  We obtain a factorization of the spectral zeta function of such an operator, expressed in terms of the Dirac delta hyperfunction, a geometric zeta function, and the zeta function associated with the dynamics of the corresponding renormalization map, viewed either as a polynomial function on $\mathbb{C}$ (in the first case) or (in the second case) as a polynomial on the complex projective plane, $\mathbb{P}^2(\mathbb{C})$.  Our first main result extends to the case of the fractal Laplacian on the unbounded Sierpinski gasket a factorization formula obtained by the second author for the spectral zeta function of a fractal string and later extended by A. Teplyaev to the bounded (i.e., finite) Sierpinski gasket and some other decimable fractals.  Furthermore, our second main result generalizes these factorization formulas to the renormalization maps of several complex variables associated with fractal Sturm--Liouville operators.  Moreover, as a corollary, in the very special case when the underlying self-similar measure is Lebesgue measure on $[0, 1]$, we obtain a representation of the Riemann zeta function in terms of the dynamics of a certain polynomial in $\mathbb{P}^2(\mathbb{C})$, thereby extending to several variables an analogous result by A. Teplyaev.  
\vspace{5mm}

\let\thefootnote\relax\footnotetext{

\textit{2010 Mathematics Subject Classification}. \emph{Primary} 28A80, 31C25, 32A20, 34B09, 34B40, 34B45, 37F10, 37F25, 58J15, 82D30.  \emph{Secondary} 30D05, 32A10, 94C99.

\textit{Key words and phrases}. Analysis on fractals, fractal Sturm-Liouville operators, self-similar measures and Dirichlet forms, decimation method, renormalization operator and its iterates, multivariable complex dynamics, spectral zeta function, Dirac hyperfunction, Riemann zeta function.

\textit{PACS numbers}. 02.30.Cj, 02.30.Em, 02.30.Fm, 02.50.Ga, 02.60.Lj, 02.70.Hm, 05.10.Cc, 05.45.Df, 05.60.Gg, 73.20At.
}

\end{abstract}

\newpage
\section{Introduction}

Connections between analysis on fractals, spectral theory, and complex dynamics have been of great interest in physics and mathematics in the past few decades.  See, for example, \cite{AO82}--\cite{BC08}, \cite{B79}, \cite{HHW87}, \cite{NY03}--\cite{RT83} and \cite{B91}, \cite{DGV08}, \cite{UF03}--\cite{FS92}, \cite{Kigami93}--\cite{Lapidus93}, \cite{LvanF06}, \cite{CS98}--\cite{CS03}, \cite{TS96}, \cite{RS06}, \cite{AT04}, \cite{AT07}, as well as the references therein.  In this paper, we further investigate these connections by focusing on the model studied by C. Sabot (\cite{CS98}-\cite{CS05}) in the context of multivariable complex dynamics.  It involves singular diffusions as well as fractal Hamiltonians and their spectra, along with the associated complex dynamics.

We show that the spectral zeta function of the fractal Laplacian on an infinite (unbounded) Sierpinski gasket  and the spectral zeta function of certain fractal Sturm--Liouville differential operators have a special factorization involving the Dirac hyperfunction, a geometric zeta function, and a zeta function associated with the renormalization map induced by the decimation method.  The decimation method is a process that describes the interesting relations between the spectrum of a differential operator on a suitable self-similar fractal and the dynamics of the iteration of some complex polynomial (or rational function).  First we focus on the infinite Sierpinski gasket (a suitable deterministic blow-up of the bounded gasket) and show that the spectral zeta function of the Laplacian can be written as the product of the Dirac delta hyperfunction and the spectral zeta function associated with the bounded Sierpinski gasket, which is further expressed by A. Teplyaev \cite{AT07} as the product of the zeta function of a quadratic polynomial in one complex variable and a suitable geometric zeta function (in the sense of \cite{LvanF06}).  Second, we study the Sturm--Liouville operator associated with a fractal self-similar measure on the half-line and its spectral properties associated with the renormalization map on the complex projective plane, $\mathbb{P}^2(\mathbb{C})$.  We define the zeta function associated with the renormalization map of several complex variables and observe that in a very special case corresponding to Lebesgue measure on $[0, 1]$, the Riemann zeta function can be represented in terms of it.

The theory of fractal strings and the corresponding geometric and spectral zeta functions were first studied by the second author and his collaborators in \cite{Lapidus91}--\cite{LP93}; see also \cite{LvanF06}  for a detailed exposition.  A fractal string, $\mathcal{L}$, is a countable collection of disjoint intervals of lengths $\ell_j$.  The spectral zeta function of the Dirichlet Laplacian $L$ on $\mathcal{L}$ has the following factorization 

\begin{equation}
\label{La1}
\zeta_{L}(s)= \pi ^{-s}\zeta(s)\zeta_{\mathcal{L}}(s),
\end{equation}
where $\zeta_{\mathcal{L}}(s)=\sum_{j=1}^{\infty} \ell_j^s$ is the geometric zeta function of the fractal string $\mathcal{L}$ and $\zeta(s)=\sum_{n=1}^{\infty} n^{-s}$ is the Riemann zeta function (or its meromorphic continuation).  See \cite{Lapidus92}, \cite{Lapidus93}, \cite{LM95}, \cite{LP93} and \cite{LvanF06}, Thm. 1.19.

Later on, A. Teplyaev (\cite{AT04}, \cite{AT07}) studied the spectral zeta function of the Laplacian on a class of symmetric finitely ramified fractals which includes the classic Sierpinski gasket.  He discovered that the factorization of the spectral zeta function involves a new zeta function associated with a polynomial (see  \cite{AT07}), thereby naturally extending the factorization for fractal strings.  For $Re(s) > d_R=\frac{2 \log N}{\log c}$, he defined the zeta function of a complex polynomial of degree $N$, $R=R(z)$, to be 
\[
 \zeta_{R,z_0} (s) = \lim_{n\rightarrow \infty}\sum_{z\in R^{-n}\{z_0\}}(c^n z)^{-\frac{s}{2}},
\]
assuming that $\mathcal{J}$, the Julia set of $R$, satisfies $\mathcal{J}\subset [0,\infty)$, $R(0)=0$ and $c=R'(0) > 1$.  He then showed that this zeta function has a meromorphic continuation to the half-plane. Within the one variable case, our first main result shows that the spectral zeta function of the Laplacian of the infinite Sierpinski gasket has a factorization formula expressed in terms of the spectral zeta function of the finite Sierpinski gasket and the Dirac delta hyperfunction; see Thm. \ref{infiniteSGzeta}, along with Thm \ref{SG}.  

Providing a generalization of the decimation method for a certain class of fractals, Sabot (\cite{CS98}--\cite{CS03}) introduced a rational function of several complex variables, $\rho$, called the renormalization map.  In particular, following \cite{CS98}--\cite{CS05}, we consider the Sturm--Liouville operator $H=-\frac{d}{dm}\frac{d}{dx}$ on the interval $I=[0,1]$, where $m$ is induced by a self-similar measure.  The Sturm--Liouville operator $H_{<\infty>}$ on $[0,\infty)$ is viewed as a limit of the sequence of operators $H_{<n>}=-\frac{d}{dm_{<n>}}\frac{d}{dx}$ with Dirichlet boundary conditions on $I_{<n>}=[0,\alpha^{-n}]$ which are the infinitesimal generators of the Dirichlet forms $(a_{<n>}, m_{<n>})$.  (Here, $\alpha \in (0, 1)$ is a suitable parameter; see \S 3.1.  Furthermore, note that for $n=0$, $H_{<0>}=H$ and $I_{<0>}=I$.)  The invariant curve $\phi$ of the map $\rho$ is defined in terms of the trace of the Dirichlet form on a finite set.  The dynamics of $\rho$ on this invariant curve is used to compute the spectrum of the corresponding operator.  (See \S3.1 for a review of the general framework and for a discussion of the associated decimation method.)

In \S3.2, we first define (for Re(s) sufficiently large) the zeta function $\zeta_{\rho}$ associated with the renormalization map on the complex projective plane $\mathbb{P}^2(\mathbb{C})$ as 

\[
\zeta_{\rho}(s) = \sum_{p=0}^{\infty}\sum_{\{\lambda \in \mathbb{C}: \mbox{ }\rho^p(\phi(\gamma^{-(p+1)}\lambda)) \in D\}} (\gamma^p \lambda)^{\frac{-s}{2}}, 
\]
where $D$ is a suitable subset of the Fatou set of $\rho$, and relate it to the spectral zeta function $\zeta_{sp}$ of the Sturm--Liouville operator, via a product formula of the form

\begin{equation}
\label{La2}
\zeta_{sp}(s)=\zeta_{\rho}(s) \zeta_{\mathcal{L}}(s).
\end{equation}
Here, $\zeta_{\mathcal{L}}$ is the geometric zeta function of some underlying fractal string $\mathcal{L}=\{\ell_j\}_{j=1}^{\infty}$, viewed as a sequence of scales naturally associated with the Sturm--Liouville problem.

Each element of the sequence of spectral zeta funtions $\zeta_{H_{<n>}}$ of $H_{<n>}$ on $I_{<n>}=[0,\alpha^{-n}]$ has a product formula, and the most interesting cases are $\zeta_{H_{<0>}}$ and $\zeta_{H_{<\infty>}}$ on $I$ and $\mathbb{R}^+ = [0, \infty)$, respectively.   Indeed, we will show that the zeta function $\zeta_{H_{<0>}}(s)$  is exactly equal to $\zeta_{\rho}(s)$,
\[
\zeta_{H_{<0>}}(s):= \sum_{n=1}^{\infty} \sum_{p=0}^{\infty} (\gamma^p \lambda_n)^{-\frac{s}{2}} =\zeta_{\rho}(s);
\]
see Thm. \ref{H0MultiPolyZeta} and Prop. \ref{HnZetaFunction}.  We will also show that due to the spectral behavior of $H_{<\infty>}$ on $\mathbb{R}^+$, we can factor out the spectral zeta function $\zeta_{H_{<\infty>}}$ as the product of the Dirac hyperfunction (see \S 2.1) and of the zeta function produced by the generating set $S=\{\lambda_n\}^{\infty}_{n=1}$ of the spectrum associated with $\rho$; see Thm. \ref{HZetaFunction}.

The ordinary Dirichlet Laplacian $-\frac{d^2}{dx^2}$ is a special case of the Sturm--Liouville operator when $\alpha =\frac{1}{2}$ and hence, $m$ is the Lebesgue measure.   In that case, we will show in Thm. \ref{RiemannZetaFunction} that the Riemann zeta  function $\zeta$ can be expressed in terms of the zeta function, $\zeta_{\rho}$, associated with the renormalization map $\rho$.  This result extends to several complex variables the corresponding result by A. Teplyaev stating that the Riemann zeta function can be written in terms of the zeta function of a polynomial of one complex variable. 

We expect that some of the results obtained in this paper and the techniques developed in the process should be relevant to the study of diffusion on fractals \cite{B91} and aspects of condensed matters physics (see, e.g.,  \cite{AO82}, \cite{B79}, \cite{HHW87}, \cite{NY03}--\cite{RT83}).

\section{The spectral zeta function and the zeta function of a polynomial of one variable}
\numberwithin{equation}{section}

We devote this section to the Sierpinski gasket $SG$, a classical example of a self-similar fractal on which the Laplacian is widely explored.  The Sierpinski gasket is the unique nonempty compact subset of $\mathbb{R}^2$ such that $SG=\Psi_1(SG) \cup \Psi_2(SG) \cup \Psi_3(SG)$, where $\Psi_1(x, y)=\frac{1}{2}(x, y)$, $\Psi_2(x, y)=\frac{1}{2}(x, y) + (\frac{1}{2}, 0)$ and $\Psi_3(x, y)=\frac{1}{2}(x, y)+(\frac{1}{4}, \frac{\sqrt{3}}{4})$ are contraction mappings on $\mathbb{R}^2$.  Kigami (\cite{Kigami93}, \cite{Kigami01}) has defined the Laplacian on $SG$ (or on more general p.c.f. self-similar fractals) as a limit of Laplacians on a sequence of approximating finite graphs $\Gamma_m$ with a set of vertices $V_m$.  Given a function $u$ on $\Gamma_m$ (resp., SG), define the renormalized graph energy
\[
\mathcal{E}_m(u,u) =\left(\frac{5}{3}\right)^m \sum_{x_{\sim} y} \left(u(x)-u(y)\right)^2
\]
and then, for $u$ defined on SG, let $\mathcal{E}(u,u) = \lim_{m\rightarrow\infty} \mathcal{E}_m(u,u)$.  (Here, the sum is extended to all unordered pairs $\{x, y\}$ of neighboring vertices on $\Gamma_m$.)  Given the natural self-similar measure $\mu$ on $SG$, the equation $\Delta_{\mu}u = f$ can be interpreted variationally as $\mathcal{E}(u,v)$ = $-\int_{SG} fv d\mu$, for all suitable functions $v$ vanishing on the boundary points.  The pointwise formula for the Laplacian $\Delta_{\mu}$ is given by
\[
\Delta_{\mu}u(x) = \frac{3}{2}\lim_{m \rightarrow \infty}5^m \Delta_mu(x).
\]

The physicists R. Rammal \cite{RR84} and R. Rammal and G. Toulouse \cite{RT83} have studied the spectrum of the Laplacian, in particular, the eigenvalue equation $\Delta_{\mu}u = \lambda u$, and discovered the decimation method which establishes the relations between the spectrum of the Laplacian and the dynamics of the iteration of some polynomial $R$ on $\mathbb{C}$.  Later on, T. Shima \cite{TS96} and M. Fukushima and T. Shima \cite{FS92} gave a precise mathematical statement of their results, as follows (see also the exposition in \cite{RS06}):

\begin{theorem} [Fukushima--Shima, \text{\cite{FS92}, \cite{TS96}}]
\label{FuSh}
If $u$ is an eigenfunction of $-\Delta_{m+1}$ with eigenvalue $\lambda$, that is, $-\Delta_{m+1}u = \lambda u$, and if $\lambda \notin B$, then $-\Delta_m (u|_{V_m}) = R(\lambda)u|_{V_m}$, where $B = \{\frac{5}{4}, \frac{1}{2}, \frac{3}{2}\}$ is the set of `forbidden' eigenvalues and $R(z)=z(5-4z)$.  Conversely, if $-\Delta_m u = R(\lambda) u$ and $\lambda\notin B$, then there exists a unique extension $\tilde{u}$ of $u$ such that
$-\Delta_{m+1}\tilde{u}= \lambda\tilde{u}$.

\end{theorem}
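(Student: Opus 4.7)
The plan is to prove both implications simultaneously by a Schur-complement-style elimination of the ``new'' vertices $V_{m+1}\setminus V_m$ from the eigenvalue equation $-\Delta_{m+1}u=\lambda u$ on $\Gamma_{m+1}$. The key combinatorial observation is that, since the level-$m$ small triangles of $SG$ meet only at single vertices and never along full edges, each new vertex $y\in V_{m+1}\setminus V_m$ is the midpoint of a unique edge of $\Gamma_m$ and has exactly four neighbors in $\Gamma_{m+1}$: the two endpoints of that edge (both in $V_m$) and the two other midpoints of the unique level-$m$ triangle containing it (both in $V_{m+1}\setminus V_m$). Writing the eigenvalue equation separately at old and new vertices therefore produces a block linear system whose ``new-vertex'' block decouples, triangle by triangle, into independent $3\times 3$ systems acting on the three midpoints of each level-$m$ triangle, with right-hand side linear in the three surrounding old-vertex values.

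First I would invert these local $3\times 3$ systems. A direct Cramer's-rule computation expresses each new-vertex value as an explicit affine combination of the three old-vertex values of the surrounding triangle, with coefficients rational in $\lambda$ whose common denominator is a fixed polynomial in $\lambda$. Combining this with the algebraic simplifications that arise when these formulas are substituted back into the old-vertex equations produces, as the precise obstruction to the elimination, a polynomial whose roots are exactly the forbidden values $B=\{\tfrac12,\tfrac54,\tfrac32\}$. For $\lambda\notin B$ the extension $\tilde u$ of $u|_{V_m}$ to $V_{m+1}$ is thus uniquely determined by the local formulas, which already settles the uniqueness assertion in the converse direction.

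Finally, I would substitute these explicit local expressions into the old-vertex eigenvalue equations and simplify. After algebraic reduction, the resulting operator on $u|_{V_m}$ coincides with $-\Delta_m$, while the spectral parameter is transformed into the renormalized quantity $R(\lambda)=\lambda(5-4\lambda)$; this gives the identity $-\Delta_m(u|_{V_m})=R(\lambda)u|_{V_m}$. For the converse, given $u$ on $V_m$ with $-\Delta_m u=R(\lambda)u$, one defines $\tilde u$ on $V_{m+1}\setminus V_m$ by the extension formula from the first step and verifies, by reversing the substitution, that $-\Delta_{m+1}\tilde u=\lambda\tilde u$ at every vertex. I expect the main obstacle to be algebraic bookkeeping rather than conceptual difficulty: one must set up the $3\times 3$ local systems with the correct normalization of the graph Laplacian (being careful to distinguish the degree-$2$ boundary vertices of $V_0$ from the degree-$4$ junction vertices of $V_m\setminus V_0$) and carry the elimination through cleanly enough to recognize both the renormalization polynomial $R(z)=z(5-4z)$ and the precise factorization of the obstruction polynomial whose roots make up $B$.
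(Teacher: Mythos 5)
The paper does not prove this theorem; it is quoted from Fukushima--Shima and Shima, and the elimination-of-new-vertices (Schur complement) argument you outline is precisely the route taken in those references (and in Strichartz's exposition). Your combinatorial setup is correct: each vertex of $V_{m+1}\setminus V_m$ lies in a unique level-$m$ cell, has the four neighbors you describe, and the new-vertex block of the eigenvalue equation decouples cell by cell into $3\times 3$ systems; substituting the local solution back into the equations at the $V_m$-vertices does reproduce $-\Delta_m$ with spectral parameter $R(\lambda)=\lambda(5-4\lambda)$ under the degree-$4$ normalization you flag.

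There is, however, one genuine gap: your claim that the forbidden set $B$ consists exactly of the roots of the obstruction polynomial arising from the elimination. Labelling the midpoints of a cell so that $y_i$ is opposite the corner $x_i$, the local system is $\bigl((5-4\lambda)I-J\bigr)\bigl(u(y_1),u(y_2),u(y_3)\bigr)^{T}=\bigl(u(x_2)+u(x_3),\,u(x_1)+u(x_3),\,u(x_1)+u(x_2)\bigr)^{T}$, whose determinant is $(2-4\lambda)(5-4\lambda)^2$. Its roots are only $\tfrac12$ and $\tfrac54$; the third forbidden value $\tfrac32$ does not make the elimination singular, so your argument as written would not exclude it. The value $\tfrac32$ must be removed for a different reason: it is the top of the spectrum of $-\Delta_{m+1}$, and since $R(\tfrac32)=-\tfrac32<0$ while $-\Delta_m$ is nonnegative, the eliminated equation $-\Delta_m(u|_{V_m})=R(\tfrac32)\,u|_{V_m}$ forces $u|_{V_m}\equiv 0$; the eigenfunctions with eigenvalue $\tfrac32$ are ``born'' at level $m+1$ and restrict to the zero function, so $u|_{V_m}$ is not an eigenfunction of $-\Delta_m$ and the correspondence between eigenfunctions (and the multiplicity count) breaks down. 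You need to add this separate argument --- and, correspondingly, note that $\tfrac12$ and $\tfrac54$ are excluded because there the extension is non-unique or non-existent (the kernel of the local system produces eigenfunctions supported near the new vertices), which is a different failure mode than the one at $\tfrac32$. With that repair, the outline is the standard and correct proof.
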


In \cite{AT07}, Teplyaev studied the spectral properties of the Laplacian on SG and explored interesting connections between the spectral zeta function and the iteration of the polynomial $R$ (of a single complex variable) induced by the decimation method.  (See also \cite{AT04}.)  The \textit{spectral zeta function} is the meromorphic continuation of the Dirichlet series defined in terms of the eigenvalues of the Laplace operator, as we now explain.

\begin{definition}
\label{SpecZetaFunction}
The \emph{spectral zeta function} of a positive self-adjoint operator $L$ with compact resolvent \emph{(}and hence, with discrete spectrum\emph{)} is given \emph{(}for $Re(s)$ large enough\emph{)} by 
\begin{equation}
\label{SpecZeta}
\zeta_L(s) = \sum^{\infty}_{j=1}(\kappa_j)^{-s/2},
\end{equation}
where the positive real numbers $\kappa_j$ are the eigenvalues of the operator $L$ written in nondecreasing order and counted according to their multiplicities. 
\end{definition}


\begin{definition}[Teplyaev, \text{\cite{AT07}}]
\label{PolyZeta}
Let $R$ be a polynomial of degree $N$ satisfying $R(0)=0$, $c:=R^{\prime}(0) > 1$, and with Julia set $\mathcal{J}\subset [0,\infty)$.  Then, with $R^{-n}$ denoting the nth inverse iterate of $R$, the \emph{zeta function} of $R$ is defined for $Re(s)> d_R := \frac{2 \log N}{\log c}$ by
\[
 \zeta_{R,z_0} (s) = \lim_{n\rightarrow \infty}\sum_{z\in R^{-n}\{z_0\}}(c^n z)^{-\frac{s}{2}}.
\]  
\end{definition}

In addition, in the case of the Laplacian on the compact Sierpinski gasket, he discovered the product structure of the spectral zeta function that involves the zeta function of a polynomial.  Moreover, revisiting the example of fractal strings, he showed that the Riemann zeta function could be represented as the zeta function of a certain quadratic polynomial, thereby reinterpreting the corresponding product formula \eqref{La1} for the spectral zeta function of a fractal string obtained in \cite{Lapidus92}, \cite{Lapidus93}; see Remark \ref{TeplyaevRemark1} below.

\begin{theorem}[Teplyaev, \text{\cite{AT07}}]
\label{SG}
The spectral zeta function of the Laplacian on SG is 
\begin{equation}
\label{SpectralSG}
\zeta_{\Delta_{\mu}}(s) = \zeta_{R,\frac{3}{4}}(s)\frac{5^{-\frac{s}{2}}}{2} \bigg(\frac{1}{1-3 \cdot 5^{-\frac{s}{2}}}+\frac{3}{1-5^{-\frac{s}{2}}}\bigg)+\zeta_{R,\frac{5}{4}}\frac{5^{-s}}{2}\bigg(\frac{3}{1-3 \cdot 5^{-\frac{s}{2}}}-\frac{1}{1-5^{-\frac{s}{2}}}\bigg),
\end{equation}
where $R(z)=z(5-4z)$.  Furthermore, there exists $\epsilon >0$ such that $\zeta_{\Delta_{\mu}}(s)$ has a meromorphic continuation for $Re(s)>-\epsilon$, with poles contained in $\bigg\{\frac{2in\pi}{\log 5}, \frac{\log 9+2in\pi}{\log 5}: n\in \mathbb{Z}\bigg\}$.
\end{theorem}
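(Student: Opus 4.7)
The strategy is to combine the spectral decimation statement of Theorem \ref{FuSh} with a careful bookkeeping of multiplicities to express the Dirichlet spectrum of $-\Delta_\mu$ as a union of two ``towers'', each descending from one of the seed values $z_0 \in \{3/4, 5/4\}$, and then to collect the Dirichlet series \eqref{SpecZeta} into the product form \eqref{SpectralSG} by recognizing Teplyaev's polynomial zeta functions $\zeta_{R, 3/4}$ and $\zeta_{R, 5/4}$ inside.

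First, following \cite{FS92,TS96,AT07}, I would recall that the Dirichlet eigenvalues of $-\Delta_\mu$ on $SG$ split into two families. The \emph{generic} family is indexed by pairs $(m, z)$ with $m \ge 0$ and $z \in R^{-m}\{3/4\}$; these arise from extending the seed $3/4$ through the graphs $\Gamma_{m+1}, \Gamma_{m+2}, \dots$ using the inverse-branch part of Theorem \ref{FuSh}. The \emph{localized} family is indexed by pairs $(m, z)$ with $m \ge 0$ and $z \in R^{-m}\{5/4\}$; here the forbidden value $5/4 \in B$ gives rise, at each level, to new eigenfunctions supported on proper subcells, with a multiplicity that grows like $3^m$. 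The precise multiplicity counts, which are linear combinations of $3^m$ and $1$ computed inductively on the prefractal graphs $\Gamma_m$, are exactly those needed to produce the two rational combinations
\[
\tfrac{5^{-s/2}}{2}\bigl(\tfrac{1}{1-3\cdot 5^{-s/2}} + \tfrac{3}{1 - 5^{-s/2}}\bigr), \qquad \tfrac{5^{-s}}{2}\bigl(\tfrac{3}{1 - 3 \cdot 5^{-s/2}} - \tfrac{1}{1 - 5^{-s/2}}\bigr)
\]
appearing in \eqref{SpectralSG}. The extra factors $5^{-s/2}$ (resp.\ $5^{-s}$) in front reflect the fact that the seed $3/4$ (resp.\ $5/4$) first enters the graph spectrum at level $m = 1$ (resp.\ $m = 2$), so the overall $5^m$-scaling of an eigenvalue of $-\Delta_\mu$ is shifted accordingly.

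Next, substituting this spectral enumeration into \eqref{SpecZeta} and interchanging orders of summation in a right half-plane where the resulting double sum is absolutely convergent, the inner sum over $z \in R^{-m}\{z_0\}$ reproduces in the limit as $m \to \infty$ Teplyaev's polynomial zeta function $\zeta_{R, z_0}(s)$ of Definition \ref{PolyZeta}, while the outer summation over $m$ against the multiplicities collapses into one of the two rational functions of $5^{-s/2}$ displayed above. The meromorphic continuation then follows at once: by \cite{AT07}, each $\zeta_{R, z_0}(s)$ extends meromorphically to a half-plane $Re(s) > -\epsilon$ for some $\epsilon > 0$, and the remaining factors are manifestly meromorphic on all of $\mathbb{C}$, with simple poles precisely where $5^{s/2} \in \{1, 3\}$, yielding the stated pole set $\{2\pi i n/\log 5,\ (\log 9 + 2\pi i n)/\log 5 : n \in \mathbb{Z}\}$.

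The principal obstacle is the multiplicity count described in the second step. The inductive argument on the graphs $\Gamma_m$ must distinguish eigenfunctions that propagate continuously through every scale from those that are localized by vanishing on a cut set, and it must carefully handle the degeneracy of Theorem \ref{FuSh} at the forbidden values $B = \{5/4, 1/2, 3/2\}$ without over- or under-counting. Once this combinatorial input from \cite{FS92,TS96} is in hand, the remaining identification with $\zeta_{R, 3/4}(s)$ and $\zeta_{R, 5/4}(s)$ and the subsequent meromorphic continuation are essentially formal.
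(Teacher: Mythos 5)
This theorem is quoted from Teplyaev \cite{AT07} and the paper supplies no proof of its own, so the only meaningful comparison is with the argument in the cited source (\cite{AT07}, building on \cite{FS92}, \cite{TS96}) --- and your outline is exactly that argument: spectral decimation organizes the Dirichlet spectrum into two towers of inverse iterates of $R$ seeded at $3/4$ and $5/4$, the multiplicities at each birth level are computed on the prefractal graphs, and the Dirichlet series is resummed so that the inner sums become $\zeta_{R,3/4}$ and $\zeta_{R,5/4}$ while the multiplicity generating functions become the two rational factors in $5^{-s/2}$. The identification of the prefactors $5^{-s/2}$ and $5^{-s}$ with the birth levels $m=1$ and $m=2$, and the derivation of the pole set from the zeros of $1-5^{-s/2}$ and $1-3\cdot 5^{-s/2}$ together with the holomorphy of $\zeta_{R,z_0}$ in a half-plane $Re(s)>-\epsilon$, are all correct.

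The one substantive gap is that you assert the multiplicities are ``exactly those needed to produce'' the two rational combinations rather than deriving them; since that combinatorial count is the entire nontrivial content of the theorem beyond Theorem \ref{FuSh}, the argument as written is circular at its center. To close it you would need the explicit values: the $3/4$-series eigenvalue born at level $m\geq 1$ has multiplicity $\tfrac{1}{2}(3^{m-1}+3)$ and the $5/4$-series eigenvalue born at level $m\geq 2$ has multiplicity $\tfrac{1}{2}(3^{m-1}-1)$, obtained by the inductive dimension count on $\Gamma_m$ (number of interior vertices $\tfrac{1}{2}(3^{m+1}-3)$, minus the eigenfunctions inherited from $\Gamma_{m-1}$ via the two admissible inverse branches, with the localized eigenfunctions at the forbidden value $5/4$ making up the deficit). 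Summing $\sum_{m\geq 1}\tfrac{1}{2}(3^{m-1}+3)\,5^{-ms/2}$ and $\sum_{m\geq 2}\tfrac{1}{2}(3^{m-1}-1)\,5^{-ms/2}$ then yields precisely the two bracketed factors in \eqref{SpectralSG}. With that count supplied, your outline is a faithful reconstruction of the proof in \cite{AT07}.
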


\noindent We point out that some time later, and motivated in part by the results and conjectures of \cite{Kigami93}, \cite{Lapidus92}, \cite{Lapidus93}, but in part independently of \cite{AT04}, \cite{AT07}, Derfel, Grabner, and Vogel \cite{DGV08} have also worked on the same zeta function associated with a polynomial and proved that it has a meromorphic continuation on the whole complex plane.  They expressed the spectral zeta function in terms of this zeta function and a zeta function related to the generating set associated with the multiplicities of the eigenvalues of the operator. 

In closing this section, we mention the interesting example of the Cantor self-similar fractal string, which is the complement of the middle third Cantor set in $[0,1]$.  Hence, its geometric zeta function is given by $\zeta_{\mathcal{L}}(s)=(3^s-2)^{-1}$ and so the factorization formula for the associated spectral zeta function is $\zeta_{L}(s)=\pi^{-s}\zeta(s) (3^s-2)^{-1}$.  This product formula is similar to the one obtained for the spectral zeta function on $SG$.  More generally, one has analogous, although more complicated, expressions for $\zeta_{\mathcal{L}}$, the geometric zeta function of an arbitrary self-similar fractal string (see \cite{LvanF06}, Chs. 2 and 3), and hence, analogous factorization formulas for $\zeta_{L}$ discussed at the beginning of this introduction and obtained in \cite{Lapidus92}, \cite{Lapidus93}, \cite{LvanF06}; see Eq. \eqref{La1}.

\begin{remark}
\label{TeplyaevRemark1}
By considering the case of the unit interval, A. Teplyaev \emph{\cite{AT07}} proved that the Riemann zeta function can be described in terms of the zeta function of a quadratic polynomial of one complex variable.  More specifically, the Riemann zeta function $\zeta(s)$ can be represented by $\zeta(s)=\frac{1}{2}C^s \zeta_{R, 0}(s)$, where $C=\sqrt{2}\pi$ and $\zeta_{R, 0}(s)$ is the zeta function of the polynomial $R(z)=2z(2-z)$.
\end{remark}

\subsection{Hyperfunctions}
In this section, we give a brief introduction to hyperfunctions, which are the distributional generalization of analytic functions.  (We refer the interested reader to M. Sato's papers (\cite{MS58}, \cite{MS59}) for a general discussion of this beautiful subject, and to the books by U. Graf \cite{UG10} and M. Morimoto \cite{MM76} for a more elementary and directly accessible introduction.)

Let $\Omega$ be a subset of $\mathbb{R}$.  A complex neighborhood of $\Omega$ is an open subset $U \subset \mathbb{C}$ such that $\Omega$ is a closed subset of $U$.  We denote by $\mathcal{C}(U)$ and $\mathcal{C}(U\setminus\Omega)$ the vector spaces of holomorphic functions on $U$ and $U\setminus \Omega$, respectively.  The quotient space $\mathcal{C}(U\setminus \Omega)/ \mathcal{C}(U)$ is defined in terms of the equivalence relation according to which any holomorphic function in $\mathcal{C}(U\setminus \Omega)$ that extends holomorphically to all of $U$ is identified with the zero function.  By definition, each equivalence class represents a \textit{hyperfunction}. 

Aternatively, a hyperfunction on the real line, $f(x)=[F(z)]=[F_+, F_-]$, consists of two functions, $F_+(z)$ and $F_-(z)$, which are analytic in the upper and the lower half-planes, respectively,  and such that the following limit exists:
\[
\lim_{\epsilon \rightarrow 0^+} \bigg(F_+(x+i\epsilon)-F_-(x-i\epsilon)\bigg).
\] 
Every hyperfunction $[F_+, F_-]$ forms an equivalence class of the form $[h+ F_+, h+ F_-]$, where $h$ is a holomorphic function on $U$.  Any holomorphic function $g$ can be expressed as a hyperfunction $g= [g, 0]= [0, -g]$.  All the standard elementary operations on hyperfunctions are satisfied, such as 
\[
[F_+, F_-]+[G_+, G_-]= [F_+ +G_+, F_- +G_-] \mbox{  and  }
\frac{d}{dx}[F_+, F_-] =[\frac{d}{dx}F_+, \frac{d}{dx}F_-].
\]
The product of two hyperfunctions does not always make sense but the product of a hyperfunction $F$ by a holomorphic function $h$ can be defined by $h[F_+, F_-]=[h F_+, h F_-]$.

\begin{example}
\label{DracDelta}
We discuss here the\emph{ Dirac delta hyperfunction} on the real line $\mathbb{R}$;  it is one of the most important hyperfunctions and is given by $\delta_{\mathbb{R}}(z)=[-\frac{1}{2\pi iz}, -\frac{1}{2\pi iz}]$. For $x \neq 0$,

\begin{eqnarray*}
\begin{aligned}
f(x):=\lim_{\epsilon \rightarrow 0^+} \bigg(F_+(x+i\epsilon) -F_-(x-i\epsilon)\bigg)
&=\lim_{\epsilon \rightarrow 0^+}\bigg(\frac{-1}{2 \pi i (x+i\epsilon)} -\frac{-1}{2\pi i(x-i \epsilon)} \bigg)\\
 &= \lim_{\epsilon \rightarrow 0^+} \frac{\epsilon}{\pi(x^2+\epsilon^2)}
 = 0.
\end{aligned}
\end{eqnarray*}
\noindent For $x=0$, however, the above limit does not exist, and this is the point at which the delta `function' has an isolated singularity.  \emph{(}See \emph{\cite{UG10}}, \S 1.2  and \emph{\cite{MM76}}, \S 3.4.\emph{)}

\end{example}

Next, we will present a lemma which will be used to prove some of our main results.  We consider the bi-infinite series $\sum_{p=-\infty}^{\infty}(\gamma^p)^{-\frac{s}{2}}$, with $\gamma > 1$ and $s \in \mathbb{C}$.  The most peculiar behavior of this series is that it seems to be equal to zero, even though it does not make sense to sum up a complex series with one part converging and the other part diverging.  However, this series can be represented by a suitable version of the Dirac hyperfunction, as we shall soon see.  For now, we carry out the naive computation as follows:

\[
\sum_{p=-\infty}^{\infty}(\gamma^p)^{-\frac{s}{2}}
= \sum_{p=-\infty}^{-1}(\gamma^p)^{-\frac{s}{2}}+ \sum_{p=0}^{\infty}(\gamma^p)^{-\frac{s}{2}}
=\frac{\gamma^{\frac{s}{2}}}{1-\gamma^{\frac{s}{2}}} + \frac{1}{1-\gamma^{-\frac{s}{2}}} = \frac{1}{1-\gamma^{-\frac{s}{2}}} - \frac{1}{1-\gamma^{-\frac{s}{2}}}=0.
\]

Note that this computation is meaningless, unless it is properly interpreted.  Indeed, we have added two infinite series, one of which is convergent only for $Re(s) >0$, whereas the other series is convergent only for $Re(s)<0$.  In fact, fortunately, the geometric part $\sum_{p=-\infty}^{\infty} (\gamma^p)^{-\frac{s}{2}}$ can be interpreted in terms of the \emph{Dirac delta hyperfunction on the unit circle}, $\delta_{\mathbb{T}}(w)=[\delta^+_{\mathbb{T}}(w), \delta^-_{\mathbb{T}}(w)]$, by means of a suitable change of variable; namely, $w=\gamma^{-\frac{s}{2}}$.  Recall that the Dirac hyperfunction on the unit circle $\mathbb{T}$ is defined as $\delta_{\mathbb{T}}=[\delta^+_{\mathbb{T}}, \delta^-_{\mathbb{T}}]=[\frac{1}{1-z}, \frac{1}{z-1}]$.  It consists of two analytic functions, $\delta^+_{\mathbb{T}}: E \rightarrow \mathbb{C}$ and $\delta^-_{\mathbb{T}}: \mathbb{C} \backslash \bar{E} \rightarrow \mathbb{C}$, where $E=\{z\in \mathbb{C}: |z| <1 + \frac{1}{N}\}$ for a large natural number $N$.  In other words, a hyperfunction on $\mathbb{T}$ can be viewed as a suitable pair of holomorphic functions, one defined on the unit disk $|z| <1$, and one on its exterior, $|z|>1$.
(See, for example, \cite{UG10}, \S 1.3 and \cite{MM76}, \S3.3.2 for a discussion of various changes of variables in a hyperfunction.  Moreover, see \cite{YT87} for a detailed discussion of $\delta_{\mathbb{T}}$ and, more generally, of hyperfunctions on $\mathbb{T}$.)

\begin{lemma}
\label{DeltaHyperfunction}
Let $\gamma >1$ and $\delta_{\mathbb{T}}$ be the Dirac hyperfunction on $\mathbb{T}$, as above.  Then
\begin{equation}
\label{Hperfunction}
\sum_{p=-\infty}^{\infty} (\gamma^p)^{-\frac{s}{2}}=\delta_{\mathbb{T}}(\gamma^{-\frac{s}{2}}).
\end{equation}
\end{lemma}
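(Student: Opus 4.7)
The plan is to make the change of variable $w = \gamma^{-s/2}$ suggested just before the statement, split the bi-infinite series into its nonnegative and negative index parts, recognize each as a geometric series convergent on a distinct region of the $w$-plane, and identify the resulting pair of holomorphic functions with the two components of $\delta_{\mathbb{T}}$.

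More concretely, first I would set $w = \gamma^{-s/2}$ and write
\[
\sum_{p=-\infty}^{\infty}(\gamma^p)^{-s/2} \;=\; \sum_{p=0}^{\infty} w^p \;+\; \sum_{p=1}^{\infty} w^{-p}.
\]
The first sum converges precisely when $|w|<1$ (equivalently $\operatorname{Re}(s) > 0$, since $\gamma>1$) and equals $\frac{1}{1-w}$; the second sum converges precisely when $|w|>1$ (equivalently $\operatorname{Re}(s) < 0$) and, by the geometric series formula applied to $1/w$, equals $\frac{1/w}{1-1/w} = \frac{1}{w-1}$. Thus the two half-series are never simultaneously convergent, which is exactly the obstruction noted by the authors in the naive computation preceding the lemma.

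The resolution is to reinterpret the left-hand side as the equivalence class $[F_+(w), F_-(w)]$ of the pair of holomorphic functions just obtained, where $F_+(w) = \frac{1}{1-w}$ is holomorphic on a neighborhood $E = \{|w| < 1 + 1/N\}$ of the closed unit disk and $F_-(w) = \frac{1}{w-1}$ is holomorphic on $\mathbb{C}\setminus\overline{E}$. By the very definition recalled above the lemma, this is exactly $\delta_{\mathbb{T}}(w) = \bigl[\tfrac{1}{1-w},\,\tfrac{1}{w-1}\bigr]$. Substituting back $w = \gamma^{-s/2}$, which is a holomorphic change of variable sending the imaginary axis in the $s$-plane to the unit circle $\mathbb{T}$ in the $w$-plane, and invoking the standard rule that hyperfunctions transform functorially under holomorphic changes of variable (see \cite{UG10}, \S 1.3 and \cite{MM76}, \S 3.3.2), we obtain the stated equality $\sum_{p=-\infty}^{\infty}(\gamma^p)^{-s/2} = \delta_{\mathbb{T}}(\gamma^{-s/2})$.

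The only genuine subtlety, and the step I would be most careful with, is the interpretive one: the left-hand side is \emph{a priori} a formally divergent bi-infinite series, and the equality must be asserted in the sense of equivalence classes in $\mathcal{C}(U\setminus \mathbb{T})/\mathcal{C}(U)$ for a suitable complex neighborhood $U$ of $\mathbb{T}$. Once that interpretation is fixed, the computation itself is just two geometric sums, and the change of variable $w=\gamma^{-s/2}$ is legitimate because it is holomorphic and nonvanishing, so it pulls back $\delta_{\mathbb{T}}$ to a well-defined hyperfunction in $s$.
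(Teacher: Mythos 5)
Your proposal is correct and follows essentially the same route as the paper: split the formal bi-infinite series $\sum_{p=-\infty}^{\infty}w^p$ into the two geometric series $\frac{1}{1-w}$ (for $|w|<1$) and $\frac{1}{w-1}$ (for $|w|>1$), identify the resulting pair with $\delta_{\mathbb{T}}=[\frac{1}{1-w},\frac{1}{w-1}]$ by definition, and substitute $w=\gamma^{-s/2}$. Your added remark about the equivalence-class interpretation is a welcome clarification but does not change the argument.
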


\begin{proof}

We introduce the formal expression  $\psi(w)=\sum_{p=-\infty}^{\infty}w^p$.  Note that
$$
\psi(w)=
\begin{cases}
\sum_{p=0}^{\infty}w^p =\frac{1}{1-w}, & \text{if }|w|<1\\
\sum_{p=-\infty}^{-1}w^p=\frac{1}{w-1}, & \text{if }|w| >1. 
\end{cases}
$$
Now, essentially by definition (see \cite{YT87}), $\psi(w)= \delta_{\mathbb{T}}(w)= [\delta^{+}_{\mathbb{T}}(w), \delta^{-}_{\mathbb{T}}(w)]$.  We conclude the proof of \eqref{Hperfunction} by making the change of variable $w=\gamma^{-\frac{s}{2}}$ and noting that $|w| <1$ and $|w| >1$ correspond to the upper and lower half-planes $Re(s)>0$ and $Re(s) <0$, respectively.  Indeed, $\log \gamma >0$ since $\gamma > 1$.
 \end{proof}

\subsection{Results for the Infinite Sierpinski Gasket}
We now extend the finite Sierpinski gasket to the infinite (or unbounded) Sierpinski gasket.  Let $k=\{k_n\}_{n\geq1}$ be a fixed sequence, with $k_n \in \{1, 2, 3\}$ for all $n\geq1$.  We construct a sequence $SG^{(n)}= \Psi_{k, n}^{-1}(SG)$, where $\Psi_{k, n}=\Psi_{k_1...k_n}:=\Psi_{k_n}\circ...\circ \Psi_{k_1}$.  The infinite Sierpinski gasket is then defined by 
\[
SG^{(\infty)}=\bigcup_{n=0}^{\infty}SG^{(n)},
\]
viewed as a blow-up of SG.  The mth pre-gasket approximating $SG^{(n)}$ and $SG^{(\infty)}$ are
$V^{(n)}_m=\Psi_{k, n}^{-1}(V_{n+m})$ and $V^{(\infty)}_m= \cup_{n=0}^{\infty}  \Psi^{-1}_{k, n}(V_{n+m})$, respectively.  Note that $V_m=V_m^{(0)}$ and $SG=SG^{(0)}$.  We next define the Laplacian $\Delta^{(n)}$ on $SG^{(n)}$ as follows: $\Delta^{(n)} u=f \in L^2(SG^{(n)}, \mu)$ iff $\mathcal{E}_{SG^{(n)}}(u, v)=\int_{SG^{(n)}} \Delta^{(n)} u v d\mu$, where $\mathcal{E}_{SG^{(n)}}$ is a scaled copy of $\mathcal{E}_{n+m}$ on $V_{n+m}$ for the finite Sierpinski pre-gasket.  The pointwise Laplacian $\Delta^{(\infty)}$ on $SG^{(\infty)}$ can then be defined by the following pointwise limit: $5^n\Delta^{(n)} u \rightarrow \Delta^{(\infty)}u$ as $n \rightarrow \infty$ (see \cite{FS92},  \cite{Kigami01}).  Here, it should be noted that the Laplacian on the finite Sierpinski gasket coincides with the restriction of $\Delta^{(\infty)}$ to the interior of the finite (or bounded) SG.  It follows from this construction that the spectrum of the Laplacian $\Delta^{(\infty)}$ on the infinite Sierpinski gasket can be generated by the spectrum of the Laplacian on the finite SG:

\begin{theorem}[see, e.g., \text{\cite{AT98}}]
Let $R(z)=z(5-4z)$, as in Thm. \ref{SG}.  Then the spectrum of the self-adjoint operator $\Delta^{(\infty)}$ acting on $L^2(SG^{(\infty)}, \mu)$ is pure point and the set of compactly supported eigenfunctions is complete.  Furthermore, the set of eigenvalues is given by $\bigcup_{n=-\infty}^{\infty} 5^n \mathcal{R}\{\Sigma\}$,
where $\Sigma=\{\frac{3}{2}\} \cup (\cup_{j=0}^{\infty} R^{-j}\{\frac{3}{4}\}) \cup (\cup_{j=0}^{\infty}R^{-j}\{\frac{5}{4}\})$ is the set of eigenvalues of the Laplacian $\Delta_{\mu}$ on the finite SG, $\mathcal{R}(z):=\lim_{m \rightarrow \infty} 5^m R^{-m}_{-}(z)$ and $R^{-m}_{-}$ is the branch of the mth inverse iterate of $R$ that passes through the origin.
\end{theorem}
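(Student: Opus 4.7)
The plan is to decompose the proof into three parts: construct a family of compactly supported eigenfunctions of $\Delta^{(\infty)}$ from the eigenfunctions of the finite Sierpinski Laplacian $\Delta_{\mu}$, identify the associated eigenvalues with the set $\bigcup_{n\in\mathbb{Z}} 5^n\mathcal{R}\{\Sigma\}$, and then show completeness of this family, which simultaneously yields that the spectrum is pure point.

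First, I would exploit the self-similarity of the blow-up together with the decimation method (Thm.~\ref{FuSh}). Starting from an eigenfunction $u$ of $-\Delta_{\mu}$ on $SG$ with eigenvalue $\sigma\in\Sigma$ that vanishes on the three boundary vertices, extending $u$ by zero produces a compactly supported function $\tilde{u}$ on each $SG^{(n)}$, and hence on $SG^{(\infty)}$. Under the identification of $SG$ with a cell inside $SG^{(n)}$ via $\Psi_{k,n}$, the pointwise Laplacian $\Delta^{(n)}$ pulls back to $5^{-n}\Delta_{\mu}$ on that cell, so passing to the limit $5^n\Delta^{(n)}\to\Delta^{(\infty)}$ shows that $\tilde{u}$ is an eigenfunction of $\Delta^{(\infty)}$ with eigenvalue $5^n\sigma$ up to sign. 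Ranging $n$ over $\mathbb{Z}$ then accounts for both sub-cells (positive powers of $5$) and larger blow-up cells (negative powers), producing the prefactor $5^n$ in the eigenvalue set.

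To fill out the complete set of eigenvalues on each scale, I would iterate the Fukushima--Shima decimation backwards. Given $\sigma\in\Sigma$, for each $m\geq 0$ the value $R^{-m}_{-}(\sigma)$ — taking the inverse branch passing through the repelling fixed point $0$ of $R$ — is an eigenvalue at graph level $m$ for some approximating Laplacian. Rescaling to the pointwise Laplacian multiplies by $5^m$, and passing to $m\to\infty$ collapses these values to $\mathcal{R}(\sigma)=\lim_{m\to\infty}5^m R^{-m}_{-}(\sigma)$, which one recognizes as the Koenigs linearizer of $R$ at $0$ with multiplier $R'(0)=5$. Composing this with the blow-up rescaling from the previous step yields exactly $\bigcup_{n\in\mathbb{Z}}5^n\mathcal{R}\{\Sigma\}$.

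For completeness, I would argue that the span of the compactly supported eigenfunctions constructed above is dense in $L^2(SG^{(\infty)},\mu)$ by first noting that the eigenfunctions of the finite Sierpinski Laplacian are complete in $L^2(SG,\mu)$, so their zero-extensions span $L^2(SG^{(n)},\mu)$ for each $n$; density in $L^2(SG^{(\infty)},\mu)$ then follows from $SG^{(\infty)}=\bigcup_n SG^{(n)}$ via a standard exhaustion argument. The main obstacle I anticipate is verifying that zero-extension of a Dirichlet-type eigenfunction across a cell boundary actually produces an eigenfunction of the pointwise operator $\Delta^{(\infty)}$ at the interface vertices; this requires careful use of the exclusion of the forbidden set $B=\{5/4,1/2,3/2\}$ in the decimation theorem, together with the explicit structure of $\Sigma$ in terms of inverse iterates from $\{3/4,5/4\}$ and the special eigenvalue $3/2$, so that the interface behavior is compatible with the pointwise Laplacian equation on both sides.
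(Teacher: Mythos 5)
First, a point of reference: the paper does not prove this theorem at all --- it is quoted from Teplyaev \cite{AT98} --- so your proposal has to be measured against that source rather than against an argument in the text. Your overall architecture (build compactly supported eigenfunctions by localization and rescaling along the blow-up, identify the eigenvalues via inverse iterates of $R$ and the limit $\mathcal{R}$, then prove completeness) is indeed the architecture of \cite{AT98}, and your identification of $\mathcal{R}$ as the inverse Koenigs linearizer of $R$ at the repelling fixed point $0$ with multiplier $R'(0)=5$ is correct.

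The genuine gap is in the completeness step. You assert that the zero-extensions of the Dirichlet eigenfunctions of $\Delta_{\mu}$ on $SG^{(n)}$ span $L^2(SG^{(n)},\mu)$ and then invoke exhaustion over $n$. They do span $L^2(SG^{(n)},\mu)$, but most of them are \emph{not} eigenfunctions of $\Delta^{(\infty)}$: for the zero-extension to satisfy the pointwise eigenvalue equation at the three boundary vertices of $SG^{(n)}$, the eigenfunction must vanish there \emph{together with its normal derivatives} --- precisely the interface condition you defer to the end as an ``anticipated obstacle.'' Consequently the compactly supported eigenfunctions of $\Delta^{(\infty)}$ supported in $SG^{(n)}$ form a \emph{proper} closed subspace of $L^2(SG^{(n)},\mu)$, and the naive exhaustion collapses. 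The actual proof in \cite{AT98} rests on a multiplicity-counting estimate: the localized eigenfunctions (roughly, those arising from the high-multiplicity $\frac{5}{4}$- and $\frac{3}{2}$-series in $\Sigma$, whose eigenspace dimensions grow like the total number of vertices at level $m$) constitute an asymptotically full proportion of the spectral decomposition, so a function supported in a fixed $SG^{(n)}$ is approximated by compactly supported eigenfunctions living in much larger cells $SG^{(N)}$ with $N\gg n$, not in $SG^{(n)}$ itself. This estimate is the heart of the theorem; it is also what makes the eigenvalue list exhaustive and rules out residual continuous spectrum. Without it, your argument establishes only that $\bigcup_{n}5^n\mathcal{R}\{\Sigma\}$ is \emph{contained} in the point spectrum, not the full statement.
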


In particular, the spectrum of $\Delta^{(\infty)}$ has the following form: 
\begin{equation}
\label{InfiniteSGSpectral}
\bigcup_{n=-\infty}^{\infty} \bigcup_{j=0}^{\infty} 5^n \mathcal{R}(R^{-j}(z_0)),
\end{equation}
where $z_0=\frac{3}{4}, \frac{5}{4}$.  Every eigenvalue $\lambda$ of $\Delta^{(\infty)}$ can be expressed as $\lambda=5^n\lim_{m\rightarrow \infty}5^mR^{-m}_{-}(z_m)$ for some $n \in \mathbb{Z}$, with $z_m$ in the spectrum $\sigma(\Delta_m)$ of the finite mth Sierpinski pre-gasket.

We now state our first main result:
\begin{theorem}
\label{infiniteSGzeta}
The spectral zeta function $ \zeta_{\Delta^{(\infty)}}$ of the Laplacian $\Delta^{(\infty)}$ on the infinite Sierpinski gasket $SG^{(\infty)}$ is given by 
\begin{equation}
\label{infiniteSG}
\zeta_{\Delta^{(\infty)}}(s)=\delta_{\mathbb{T}}(5^{-\frac{s}{2}}) \zeta_{\Delta_{\mu}}(s),
\end{equation}
where $\delta_{\mathbb{T}}$ is the Dirac hyperfunction \emph{(}as in \S 2.1\emph{)} and $\zeta_{\Delta_{\mu}}$ is the spectral zeta function of the Laplacian on the finite SG as given and factorized explicitly in Eq. \eqref{SpectralSG} of Thm. \ref{SG}.
\end{theorem}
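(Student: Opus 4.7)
The plan is to write the spectral zeta function of $\Delta^{(\infty)}$ as a double sum using the spectral description given in Eq.~\eqref{InfiniteSGSpectral}, separate the scaling index $n \in \mathbb{Z}$ from the ``base'' finite SG eigenvalues, and then interpret the resulting bi-infinite geometric sum as a Dirac hyperfunction on $\mathbb{T}$ by means of Lemma~\ref{DeltaHyperfunction}.

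First, I would invoke the preceding theorem, which describes the spectrum of $\Delta^{(\infty)}$ as $\bigcup_{n \in \mathbb{Z}} 5^n \mathcal{R}\{\Sigma\}$, and identify (with the correct multiplicities) $\mathcal{R}\{\Sigma\}$ with the spectrum $\sigma(\Delta_\mu)$ of the Laplacian on the finite SG appearing in Thm.~\ref{SG}. This identification, together with Definition~\ref{SpecZetaFunction}, allows me to write
\[
\zeta_{\Delta^{(\infty)}}(s) \;=\; \sum_{n \in \mathbb{Z}} \sum_{\kappa \in \sigma(\Delta_\mu)} (5^n \kappa)^{-s/2}.
\]
Separating the two summations formally gives
\[
\zeta_{\Delta^{(\infty)}}(s) \;=\; \left(\sum_{n \in \mathbb{Z}} (5^n)^{-s/2}\right) \cdot \left(\sum_{\kappa \in \sigma(\Delta_\mu)} \kappa^{-s/2}\right) \;=\; \left(\sum_{n \in \mathbb{Z}} (5^n)^{-s/2}\right) \zeta_{\Delta_\mu}(s).
\]

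Next, I would note that the bi-infinite series $\sum_{n \in \mathbb{Z}} (5^n)^{-s/2}$ has no meaning as a classical Dirichlet series, since the tail $n \geq 0$ converges only for $\mathrm{Re}(s) > 0$ while the tail $n < 0$ converges only for $\mathrm{Re}(s) < 0$. However, applying Lemma~\ref{DeltaHyperfunction} with $\gamma = 5$, the two tails are the two branches of a hyperfunction on $\mathbb{T}$ under the change of variable $w = 5^{-s/2}$, so that
\[
\sum_{n \in \mathbb{Z}} (5^n)^{-s/2} \;=\; \delta_{\mathbb{T}}(5^{-s/2}).
\]
Substituting this into the preceding display yields the factorization \eqref{infiniteSG}, and the explicit form of $\zeta_{\Delta_\mu}$ given by Eq.~\eqref{SpectralSG} in Thm.~\ref{SG} completes the identification.

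The main obstacle I expect is conceptual rather than computational: one must justify that the equality is best understood as an identity of hyperfunction-valued expressions, not of ordinary meromorphic functions, and that the scaling/multiplicity bookkeeping underlying the identification $\mathcal{R}\{\Sigma\} \leftrightarrow \sigma(\Delta_\mu)$ is consistent with the pointwise limit $5^n \Delta^{(n)} \to \Delta^{(\infty)}$ used to construct the Laplacian on the blow-up. Once these two points are in place, the decoupling of the $n$-sum from the $\kappa$-sum is a direct consequence of the self-similar scaling $5^n$, and the hyperfunction identity of Lemma~\ref{DeltaHyperfunction} supplies the exact form of the Dirac factor appearing in the statement.
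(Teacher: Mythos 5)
Your proposal is correct and follows essentially the same route as the paper: both decompose the spectrum of $\Delta^{(\infty)}$ as the union over $n \in \mathbb{Z}$ of $5^n$ times the spectrum of $\Delta_{\mu}$, factor the resulting double sum into $\bigl(\sum_{n \in \mathbb{Z}} (5^{-s/2})^n\bigr)\,\zeta_{\Delta_{\mu}}(s)$, and invoke Lemma \ref{DeltaHyperfunction} with $\gamma = 5$ to identify the bi-infinite geometric factor with $\delta_{\mathbb{T}}(5^{-s/2})$. The cautionary remarks you add about multiplicity bookkeeping and the hyperfunction-valued interpretation of the identity are sensible and, if anything, slightly more careful than the paper's own two-line argument.
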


\begin{proof}
In light of Eq. \eqref{InfiniteSGSpectral}  and Thm. \ref{SG}, this result follows from Lemma \ref{DeltaHyperfunction}.  Indeed, note that $\lambda$ is an eigenvalue of $\Delta_{\mu}$ iff $5^n \lambda$ is an eigenvalue of $\Delta^{(n)}$.  Furthermore, if $\lambda$ is an eigenvalue of $\Delta^{(\infty)}$, then $5^n \lambda$ is also an eigenvalue of $\Delta^{(\infty)}$ for $n \in \mathbb{Z}$ (see \cite{FS92}).  Combining these two facts, we obtain $\zeta(s)=\bigg(\sum_{n=-\infty}^{\infty} (5^{-\frac{s}{2}})^n \bigg) \zeta_{\Delta_{\mu}}(s)$.  By Lemma \ref{DeltaHyperfunction} and using the substitution $\gamma=5$, we now deduce the desired result.
\end{proof}

\section{The Sturm--Liouville Operator}

\subsection{Dirichlet forms and the Sturm--Liouville operator on [0,1]}
We investigate a class of self-similar sets and measures in terms of the spectrum and the spectral zeta function of the associated fractal differential operators.  C. Sabot, in a series of papers (\cite{CS98}--\cite{CS05}), extended the decimation method to  Laplacians defined on a class of finitely-ramified self-similar sets with blow-ups.  We discuss the prototypical example he studied, fractal Laplacians on the blow-up $I_{<\infty>}=[0, \infty)$ of the unit interval $I=I_{<0>}=[0,1]$.  From now on, we will assume that 
\begin{equation}
\label{constants}
0<\alpha<1 \mbox{,  } b=1-\alpha, \mbox{  } \delta =\frac{\alpha}{1-\alpha}, \mbox{ and } \gamma=\frac{1}{\alpha(1-\alpha)}. 
\end{equation}
Consider the contraction mappings from $I=[0, 1]$ to itself given by
\[
\Psi_1(x)= \alpha x, \mbox{  } \Psi_2(x)=1-(1-\alpha)(1-x),
\]
and the unique self-similar measure $m$ on $[0, 1]$ such that for all $f \in C([0, 1])$, 
\begin{equation}
\label{IntegralIdentity}
\int_0^1 f dm = b\int_0^1 f\circ\Psi_1 dm + (1-b)\int_0^1 f\circ\Psi_2 dm.
\end{equation}

Define $H_{<0>} = -\frac{d}{dm}\frac{d}{dx}$, the free Hamiltonian with Dirichlet boundary conditions on $[0, 1]$, by $H_{<0>} f= g$ on the domain 
\[
\bigg\{ f\in L^2(I, m), \exists g \in L^2(I, m), f(x)= cx + d + \int_0^x \int_0^y g(z)dm(z) dy, f(0)= f(1)=0 \bigg\}.  
\]
The operator $H_{<0>}$  is the infinitesimal generator associated with the Dirichlet form $(a,\mathcal{D})$ given by $a(f,g) = \int_0^1 f^{\prime} g^{\prime}dx$, for $f,g \in \mathcal{D}$,  where $\mathcal{D} = \{ f \in L^2(I, m): f^{\prime} \in L^2(I, dx)\}$.  As can be easily checked, $a$ satisfies the self-similarity equation
\begin{equation}
\label{SelfsimilarIdentity}
a(f)=\alpha^{-1} a(f \circ \Psi_1) + (1-\alpha)^{-1} a(f \circ \Psi_2),
\end{equation}
where we denote the quadratic form $a(f, f)$ by $a(f)$.
(See, e.g., \cite{UF03} for an exposition.)

Next, extend $I$ to $I_{<n>} = \Psi_1^{-n}(I)=[0,\alpha^{-n}]$, which can be expressed as a self-similar set as follows:
$I_{<n>} = \bigcup_{i_1,...,i_n}\Psi_{i_1...i_n}(I_{<n>})$, where $(i_1,...,i_n) \in {\{1,2\}}^{n}$.  Here, we have set $\Psi_{i_1...i_n}=\Psi_{i_n} \circ ...\circ \Psi_{i_1}$.  We define the self-similar measure $m_{<n>}$ by $\int_{I_{<n>}}fdm_{<n>} = (1-\alpha)^{-n} \int_I f\circ\Psi_1^{-n}dm$, for all $f \in C(I_{<n>})$.  Similarly, the corresponding differential operator, $H_{<n>}= -\frac{d}{dm_{<n>}}\frac{d}{dx}$ on $I_{<n>}=[0, \alpha^{-n}]$, can be defined  as the infinitesimal generator of the Dirichlet form
$(a_{<n>},\mathcal{D}_{<n>})$ given by $a_{<n>}(f) = \int_0^{{\alpha}^{-n}} (f')^2 dx = \alpha^n a(f\circ\Psi_1^{-n}), \mbox{ for }  f \in \mathcal{D}_{<n>}$,
where $\mathcal{D}_{<n>} = \{f \in L^2(I_{<n>}, m_{<n>}):  f' \mbox{ exists and } f' \in L^2(I_{<n>}, dx)\}$.

We define $H_{<\infty>}$ as the operator $ -\frac{d}{dm_{<\infty>}}\frac{d}{dx}$ with Dirichlet boundary conditions on $I_{<\infty>}= [0, \infty)$.  It is clear that the (projective system of) measures $m_{<n>}$ give rise to a measure $m_{<\infty>}$ on $I_{<\infty>}$ since for any $f \in \mathcal{D}_{<n>}$  with $supp(f) \subset [0, 1]$, $a_{<n>}(f)=a(f)$ and $\int_{I_{<n>}} fdm_{<n>} =\int_{I} f dm$.  Furthermore, we define the corresponding Dirichlet form $(a_{<\infty>}, \mathcal{D}_{<\infty>})$ by $a_{<\infty>}(f)=\lim_{n \rightarrow \infty} a_{<n>}(f|_{I_{<n>}}), \mbox{ for } f \in \mathcal{D}_{<\infty>}$,
where $\mathcal{D}_{<\infty>}=  \{f \in L^2(I_{<\infty>}, m_{<\infty>}): \sup_n  a_{<n>}(f|_{I_{<n>}}) < \infty\}$.  Clearly, $a_{<\infty>}$ satisfies a self-similar identity analogous to Eq. \eqref{SelfsimilarIdentity} and its infinitesimal generator is $H_{<\infty>}$.

The study of the eigenvalue problem 
\begin{equation}
\label{EigenProb}
H_{<n>}f = -\frac{d}{dm_{<n>}}\frac{d}{dx}f = \lambda f
\end{equation}
for the Sturm--Liouville operator with Dirichlet boundary conditions on $I_{<n>}$ revolves around a map $\rho$, called the \textit{renormalization map}, which is initially defined on a space of quadratic forms associated with the fractal and then, via analytic continuation, on $\mathbb{C}^3$ as well as (by homogeneity) on $\mathbb{P}^2({\mathbb{C}})$.  The propagator of the above differential equation \eqref{EigenProb} is very useful in producing this rational map,
\begin{equation}
\label{RenormalizationMap}
\rho([x,y,z])= [x(x+\delta^{-1}y)-\delta^{-1}z^2, \delta y(x+\delta^{-1}y)-\delta z^2, z^2],
\end{equation}
defined on the complex projective plane $\mathbb{P}^2(\mathbb{C})$.  Here, $[x, y, z]$ denote the homogeneous coordinates of a point in $\mathbb{P}^2(\mathbb{C})$, where $(x, y, z) \in \mathbb{C}^3$ is identified with $(\beta x, \beta y, \beta z)$ for any $\beta \in \mathbb{C}$, $\beta \neq 0$.  Note that in the present case, $\rho$ is a homogeneous polynomial of total degree two.  As we shall see later on, the spectrum of the fractal Sturm--Liouville operator is intimately related to the iteration of $\rho$.  In the sequel, we shall assume that $\delta \leq1$ in order for the spectrum of $H_{<0>}$, $H_{<n>}$ ($n=1, 2,...$) and $H_{<\infty>}$ to be purely discrete.

We define the \textit{propagator} $\Gamma_{\lambda}(s, t)$ for the eigenvalue problem $-\frac{d}{dm_{<\infty>}}\frac{d}{dx}f=\lambda f$ associated with the operator $H_{<\infty>}$ on $I_{<\infty>}=[0, \infty)$ as a time evolution function which for each $0\leq s \leq t$ is a $2 \times 2$ matrix with nonzero determinant such that the solution of the equation satisfies 

\[
\left[ \begin{array}{c} f(t)\\ f^{\prime}(t) \end{array} \right] = \Gamma_{\lambda}(s, t)  \left[ \begin{array}{c} f(s) \\ f^{\prime}(s) \end{array} \right].
\]

\noindent Using the self-similarity relations \eqref{IntegralIdentity} and \eqref{SelfsimilarIdentity} satisfied by the measure $m$ and the Dirichlet form $a$, respectively, and recalling that $\gamma$ is given by Eq. \eqref{constants}, we obtain $\Gamma_{<n>, \lambda}=D_{\alpha^n} \circ \Gamma_{\gamma^n\lambda} \circ D_{\alpha^{-n}}$ for the eigenvalue problem  $-\frac{d}{dm_{<n>}}\frac{d}{dx}f=\lambda f$, where
\[
D_{\alpha^n}=\begin{bmatrix} 1 & 0 \\ 0 & \alpha^n \end{bmatrix}. 
\]

The dynamics of the renormalization map $\rho$ plays a key role in calculating the spectrum of the operators $H_{<n>}$.  We introduce the \textit{invariant curve} $\phi$, which is holomorphic on $\mathbb{C}$ and satisfies the following key functional equation:
\begin{equation}
\label{InvariantCurve}
\rho(\phi(\lambda)) = \phi(\gamma\lambda),
\end{equation}
for all $\lambda \in \mathbb{C}$. An \textit{attractive fixed point} $x_0$ of $\rho$ is a point such that $\rho x_0 = x_0$ and for any other point $x$ in some neighborhood of $x_0$, the sequence $\{\rho^nx\}_{n=0}^{\infty}$ converges to $x_0$.  The \textit{basin of attraction} of a fixed point is contained in the Fatou set of $\rho$.  For $\delta >1$, $x_0= [0, 1, 0]$ is an attractive fixed point of $\rho$.  The set 
\begin{equation}
\label{SetD}
D = \{[x,y,z]: x+\delta^{-1} y = 0\}
\end{equation}
is part of the Fatou set of $\rho$ since it is contained in the basin of attraction of $x_0$.  (For various notions of higher-dimensional complex dynamics, we point out, e.g., \cite{JF96} and \cite{FS94}.)  The set $D$ and the invariant curve $\phi$ together determine the spectrum of $H_{<n>}$ and of $H_{<\infty>}$.  Moreover, the set of eigenvalues can be described by the set 

\begin{equation}
\label{SetS}
S = \{\lambda \in \mathbb{C}: \phi(\gamma^{-1}\lambda) \in D\},
\end{equation}
the `time intersections' of the curve $\phi(\gamma^{-1} \lambda)$ with $D$.  It turns out that $S$ is countably infinite and contained in $\mathbb{R}^+$.    We write $S= \{\lambda_k\}^{\infty}_{k=1}$, with $\lambda_1 \leq \lambda_2 \leq... \leq \lambda_k \leq...$ repeated accordingly to multiplicity.  Furthermore, we call $S$ the \textit{generating set} for the spectrum of $H_{<n>}$, with $n=0,1,...,\infty$. 

Let $S_p = \gamma^p S$, for each $p \in \mathbb{Z}$.  As was noted earlier, the spectrum of $H_{<\infty>}$ with Dirichlet boundary conditions is pure point for $\alpha \leq \frac{1}{2}$ (hence, for $\delta \leq1$ and $\gamma \geq 4$), an hypothesis we will make from now on, and can be deduced from the spectrum of $H_{<0>}$:

\begin{theorem}[Sabot, \text{\cite{CS05}}]
\label{Spectrum}
The spectrum of $H_{<0>}$ on $I =I_{<0>}$ is $\bigcup_{p=0}^{\infty} S_p$ and the spectrum of $H_{<\infty>}$ on $\mathbb{R}^+$ is $\bigcup_{p=-\infty}^{\infty}S_p$.  Moreover, for any $n \geq 0$, the spectrum of $H_{<n>}$ is equal to $\bigcup_{p=-n}^{\infty}S_p$.  For $n=0, 1, ...\infty$, each eigenvalues of $H_{<n>}$ is simple.
\end{theorem}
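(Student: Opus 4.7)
The plan is to first pin down $\mathrm{spec}(H_{<0>})$ by translating the Dirichlet boundary-value problem into a projective condition involving the invariant curve $\phi$ and the renormalization map $\rho$, then transport the result to $H_{<n>}$ for $0 < n < \infty$ via the scaling identity $\Gamma_{<n>,\lambda} = D_{\alpha^n}\circ \Gamma_{\gamma^n\lambda}\circ D_{\alpha^{-n}}$, and finally pass to an inductive limit to obtain $\mathrm{spec}(H_{<\infty>})$. Simplicity of each eigenvalue will follow from the fact that a second-order linear ODE has a two-dimensional solution space, already cut down to one dimension by the Dirichlet condition at a single endpoint.

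For $H_{<0>}$, a complex number $\lambda$ is a Dirichlet eigenvalue on $[0,1]$ exactly when the solution of $-\frac{d}{dm}\frac{d}{dx}f=\lambda f$ with initial data $(f(0),f'(0))=(0,1)$ also satisfies $f(1)=0$, which is a codimension-one linear condition on $\Gamma_\lambda(0,1)$. Using the self-similarity relations \eqref{IntegralIdentity} and \eqref{SelfsimilarIdentity}, the propagator $\Gamma_\lambda(0,1)$ factors through two rescaled copies of $\Gamma_{\gamma\lambda}(0,1)$; passing to the trace of the Dirichlet form $a$ on the boundary $\{0,1\}$ and then to the projective quotient, the induced evolution of homogeneous coordinates $[x,y,z]$ is precisely the map $\rho$ of Eq. \eqref{RenormalizationMap}. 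The invariant curve $\phi$ encodes those homogeneous coordinates as a function of $\lambda$, and Eq. \eqref{InvariantCurve} is just the self-similar consistency of this encoding. The Dirichlet compatibility condition then translates into membership in the hyperplane $D$ of Eq. \eqref{SetD}, so $\lambda \in \mathrm{spec}(H_{<0>})$ iff $\rho^p(\phi(\gamma^{-1}\lambda)) \in D$ for some $p \geq 0$, i.e., by Eq. \eqref{InvariantCurve}, iff $\phi(\gamma^{-(p+1)}\lambda)\in D$ for some $p\geq 0$, which is exactly $\lambda \in \bigcup_{p\geq 0} \gamma^p S = \bigcup_{p\geq 0} S_p$.

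The transfer to $H_{<n>}$ on $I_{<n>}=[0,\alpha^{-n}]$ is essentially a rescaling. The change of variables $\tilde f(y)=f(\alpha^{-n}y)$, combined with $\Gamma_{<n>,\lambda} = D_{\alpha^n}\circ\Gamma_{\gamma^n\lambda}\circ D_{\alpha^{-n}}$, yields a bijection between Dirichlet eigenpairs of $H_{<n>}$ at parameter $\lambda$ and those of $H_{<0>}$ at parameter $\gamma^n\lambda$; hence $\mathrm{spec}(H_{<n>}) = \gamma^{-n}\mathrm{spec}(H_{<0>}) = \bigcup_{p\geq -n}S_p$. For the $n = \infty$ case, I would argue that each compactly supported eigenfunction of $H_{<m>}$ (for $m$ large) extends by zero to a compactly supported eigenfunction of $H_{<\infty>}$ with the same eigenvalue, while conversely every compactly supported eigenfunction of $H_{<\infty>}$ is detected by some $H_{<m>}$ whose interval $I_{<m>}$ contains its support. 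Together with the pure-point nature of the spectrum guaranteed by the standing hypothesis $\alpha\leq 1/2$ (equivalently $\gamma\geq 4$), this gives $\mathrm{spec}(H_{<\infty>})=\bigcup_{n\geq 0}\mathrm{spec}(H_{<n>})=\bigcup_{p\in\mathbb{Z}}S_p$.

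The hard part is the identification step in the second paragraph: verifying that the projective dynamics induced by the self-similar two-step factorization of $\Gamma_\lambda$ is literally the map $\rho$ of Eq. \eqref{RenormalizationMap}, with $D$ playing exactly the role of the Dirichlet locus and $\phi$ parametrizing the trace of the quadratic form as a function of $\lambda$. This demands explicit computation of the trace of $(a,\mathcal{D})$ on $\{0,1\}$ in suitable homogeneous coordinates and a matching with Eq. \eqref{RenormalizationMap}. A subsidiary technical point is ensuring that the $n \to \infty$ limit neither loses nor gains eigenvalues; the hypothesis $\gamma \geq 4$ is the crucial ingredient preventing the sets $S_p$ from accumulating near $0$ as $p \to -\infty$, which keeps the union $\bigcup_{p \in \mathbb{Z}} S_p$ a discrete point set in $\mathbb{R}^+$. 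Once these are in place, simplicity is immediate from the two-dimensionality of the ODE solution space.
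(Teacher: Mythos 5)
The paper does not prove this theorem --- it is quoted from Sabot \cite{CS05} without proof --- so your sketch can only be measured against Sabot's argument, whose overall architecture (propagator/trace of the Dirichlet form at the boundary, projective renormalization, invariant curve, rescaling to $I_{<n>}$) you do mirror correctly for $H_{<0>}$ and for finite $n$. Beyond the admitted deferral of the identification of the induced projective dynamics with Eq.~\eqref{RenormalizationMap} (which is essentially the whole content of Sabot's construction), there is a concrete direction-of-iteration error in your characterization of $\mathrm{spec}(H_{<0>})$. By Eq.~\eqref{InvariantCurve}, $\rho^p(\phi(\gamma^{-1}\lambda))=\phi(\gamma^{p-1}\lambda)$, so the condition ``$\rho^p(\phi(\gamma^{-1}\lambda))\in D$ for some $p\geq 0$'' describes $\bigcup_{p\leq 0}S_p$, not $\bigcup_{p\geq 0}S_p$; it is \emph{not} equivalent to ``$\phi(\gamma^{-(p+1)}\lambda)\in D$ for some $p\geq 0$,'' which is the correct condition ($\phi(\gamma^{-(p+1)}\lambda)\in D$ iff $\gamma^{-p}\lambda\in S$ iff $\lambda\in S_p$, by Eq.~\eqref{SetS}). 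Your two formulations contradict each other and only the second yields the theorem, so you must show which one actually drops out of the propagator analysis; the correct one is the second, reflecting that the eigenfunction for $\gamma^p\lambda_k$ is obtained from that for $\lambda_k$ by the contraction $\Psi_1^{-p}$ (cf.\ Thm.~\ref{EigenValuesFunctions}), not by forward iteration of $\rho$ applied to $\phi(\gamma^{-1}\lambda)$.

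The second genuine gap is the passage to $H_{<\infty>}$. Extending a Dirichlet eigenfunction of $H_{<m>}$ by zero does not in general land in the domain of $H_{<\infty>}$: such an eigenfunction generically has $f'(\alpha^{-m})\neq 0$ (for $\alpha=\tfrac{1}{2}$ these are $\sin(\pi jx)$ on $[0,1]$), so the zero-extension has a corner and $-\frac{d}{dm_{<\infty>}}\frac{d}{dx}f$ acquires a point mass at $\alpha^{-m}$. Hence the inclusion $\bigcup_{n}\mathrm{spec}(H_{<n>})\subseteq\mathrm{spec}(H_{<\infty>})$ does not follow from your argument. The mechanism that works is the one recorded in Thm.~\ref{EigenValuesFunctions}: for each $\lambda_k\in S$ one produces a normalized global solution $f_k$ of $H_{<\infty>}f=\lambda_k f$ (this is precisely where the condition $\phi(\gamma^{-1}\lambda)\in D$ guarantees square-integrability/localization), generates the full family by $f_{k,p}=f_k\circ\Psi_1^{-p}$ for $p\in\mathbb{Z}$, proves completeness, and only then obtains the finite-$n$ spectra by restriction $f_{k,p}|_{I_{<n>}}$, $p\geq -n$ --- i.e., the information flows from $H_{<\infty>}$ down to $H_{<n>}$, not the other way. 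A further side remark is false: $\gamma\geq 4$ does not prevent $\bigcup_{p\in\mathbb{Z}}S_p$ from accumulating at $0$ (indeed $\gamma^p\lambda_1\to 0$ as $p\to -\infty$); the role of the hypothesis $\delta\leq 1$ is to ensure pure point spectrum, not discreteness near the origin. On the positive side, your rescaling step via $\Gamma_{<n>,\lambda}=D_{\alpha^n}\circ\Gamma_{\gamma^n\lambda}\circ D_{\alpha^{-n}}$ (the diagonal conjugation preserves the Dirichlet locus, so $\mathrm{spec}(H_{<n>})=\gamma^{-n}\,\mathrm{spec}(H_{<0>})$) and the simplicity argument from the one-dimensionality of the solution space after imposing the condition at $0$ are both correct.
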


\noindent The diagram of the set of eigenvalues of the operator $H_{<\infty>}$ is as follows:

\[\begin{array}{ccccc}
\vdots & \vdots & \vdots & \vdots & \\
\gamma^{-2}\lambda_1 & \gamma^{-2}\lambda_2 & \gamma^{-2}\lambda_3 & \gamma^{-2}\lambda_4 & \cdots \\
\gamma^{-1}\lambda_1 & \gamma^{-1}\lambda_2 & \gamma^{-1}\lambda_3 & \gamma^{-1}\lambda_4 & \cdots\\

\lambda_1 & \lambda_2 & \lambda_3 & \lambda_4 & \cdots \\
\gamma\lambda_1 & \gamma\lambda_2 & \gamma\lambda_3 & \gamma\lambda_4 & \cdots \\
\gamma^2\lambda_1 & \gamma^2\lambda_2 & \gamma^2\lambda_3 & \gamma^2\lambda_4 & \cdots\\
\vdots & \vdots & \vdots & \vdots & \\
\end{array}
\]

Sabot's work (\cite{CS98}--\cite{CS03}) has sparked an interest in generalizing the decimation method to a broader class of fractals and therefore, to the iteration of rational functions of several complex variables.  

\begin{theorem}[Sabot, \text{\cite{CS05}}]
\label{EigenValuesFunctions}
Given any $k \geq 1$, if $f_{k}$ is the normalized solution of the equation $H_{<\infty>}f = \lambda_{k} f$ for $\lambda_{k} \in S$ \emph{(}i.e., if $f$ is an eigenfunction of $H_{<\infty>}$ with eigenvalue $\lambda_k$\emph{)}, then $f_{k,p}:=f_{k}\circ\Psi_1^{-p}$ is the solution of the equation $H_{<\infty>}f = \lambda_{k,p}f$, where $\lambda_{k,p} := \gamma^p\lambda_k$ and $p \in \mathbb{Z}$ is arbitrary.
Moreover, for each $n \geq 1$, $f_{k,p,<n>}:=f_{k,p}|_{I_{<n>}}$, the restriction of $f_{k,p}$ to $I_{<n>}$, is then the solution of the equation $H_{<n>}f = \lambda_{k,p}f$.  

Finally, for each fixed $n \geq 0$,
$
\{f_{k, p, <n>}: k\geq 1, p \geq -n\}
$
is a complete set of eigenfunctions of $H_{<n>}$ in the complex Hilbert space $L^2(\mathbb{R}^+, m_{<\infty>})$.
\end{theorem}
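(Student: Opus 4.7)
The plan is to prove the three assertions in order, leveraging the self-similarity identities \eqref{IntegralIdentity} and \eqref{SelfsimilarIdentity} for the first, Sabot's description of the spectrum (Theorem \ref{Spectrum}) together with the dynamics of $\rho$ for the second, and the spectral theorem for the third. Throughout, I would work with the weak form of the eigenvalue equation, $a_{<\infty>}(f,v) = \lambda \int fv \, dm_{<\infty>}$, since the measure $m_{<\infty>}$ is singular and a purely pointwise description of $H_{<\infty>}$ is awkward.

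For the first assertion, the starting point is to iterate \eqref{IntegralIdentity} to derive the scaling identity $m_{<\infty>}(\alpha^p E) = (1-\alpha)^p \, m_{<\infty>}(E)$ for every Borel $E \subset [0,\infty)$ and every $p \in \mathbb{Z}$, together with the corresponding quadratic-form identity $a_{<\infty>}(f \circ \Psi_1^{-p}) = \alpha^{-p} a_{<\infty>}(f)$, which follows from \eqref{SelfsimilarIdentity} by iteration. Then, given a test function $v$, set $w := v \circ \Psi_1^{-p}$; a direct change of variable in each integral yields
$$a_{<\infty>}(f_{k,p}, w) = \alpha^{-p}\, a_{<\infty>}(f_k, v), \qquad \int f_{k,p}\, w \, dm_{<\infty>} = (1-\alpha)^p \int f_k\, v \, dm_{<\infty>}.$$
Combining these with $a_{<\infty>}(f_k,v) = \lambda_k \int f_k v \, dm_{<\infty>}$ and using that $v \mapsto v \circ \Psi_1^{-p}$ is a bijection on the relevant test spaces produces the eigenvalue $\alpha^{-p}(1-\alpha)^{-p}\lambda_k = \gamma^p \lambda_k$, which is exactly $\lambda_{k,p}$.

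For the second assertion, the computation of Part 1, restricted to $I_{<n>}$, immediately gives $H_{<n>}\, f_{k,p,<n>} = \lambda_{k,p}\, f_{k,p,<n>}$ in the interior, since $H_{<n>}$ and $H_{<\infty>}$ coincide on functions supported in $I_{<n>}$. The subtle point is that $f_{k,p,<n>}$ must lie in $\mathcal{D}_{<n>}$, i.e., one must have $f_{k,p}(\alpha^{-n}) = f_k(\alpha^{-(n+p)}) = 0$. This is precisely where the hypothesis $p \geq -n$ and the definition of $S$ via \eqref{SetS} enter: the condition $\phi(\gamma^{-1}\lambda_k) \in D$ combined with the functional equation \eqref{InvariantCurve} iterated $n+p$ times (and interpreted through the propagator $\Gamma_{<n>,\lambda}$) propagates the required vanishing of $f_k$ to the point $\alpha^{-(n+p)}$. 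The third assertion is then a soft consequence of the spectral theorem: under the standing hypothesis $\delta \leq 1$, each $H_{<n>}$ is a positive self-adjoint operator with compact resolvent, so it admits a complete orthogonal eigenbasis indexed by its spectrum; by Theorem \ref{Spectrum}, that spectrum is exactly $\{\gamma^p \lambda_k : k \geq 1,\ p \geq -n\}$ with each eigenvalue simple, so the family $\{f_{k,p,<n>}\}$ constructed above exhausts the basis.

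The main obstacle is the verification of the right-endpoint Dirichlet condition in Part 2. The scaling argument of Part 1 only produces a function satisfying the eigenvalue equation on all of $[0,\infty)$; extracting the vanishing $f_k(\alpha^{-(n+p)})=0$ requires translating the algebraic-dynamical condition $\phi(\gamma^{-1}\lambda_k) \in D$ into a concrete boundary statement about the $H_{<\infty>}$-eigenfunctions, which is exactly the content of Sabot's decimation correspondence for $\rho$ on $\mathbb{P}^2(\mathbb{C})$. Once this compatibility is in place, the remainder of the proof is essentially a packaging exercise combining Parts 1--2 with the spectral theorem.
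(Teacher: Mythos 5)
The paper does not prove this statement at all: it is imported verbatim from Sabot \cite{CS05} (as the attribution in the theorem header indicates), so there is no internal argument to measure your proposal against. Judged on its own terms, your outline is correct where it is explicit. The scaling computation in Part 1 is the right one: the identities $a_{<\infty>}(f\circ\Psi_1^{-p})=\alpha^{-p}a_{<\infty>}(f)$ and $m_{<\infty>}(\alpha^p E)=(1-\alpha)^p m_{<\infty>}(E)$ do follow from \eqref{SelfsimilarIdentity} and \eqref{IntegralIdentity} extended to the blow-up, and dividing them yields the factor $(\alpha(1-\alpha))^{-p}=\gamma^p$, as required. Part 3 is also fine for finite $n$ (compact resolvent plus simplicity of the eigenvalues from Thm.~\ref{Spectrum}); for $n=\infty$ you cannot invoke compact resolvent, since $0$ is an accumulation point of $\bigcup_{p\in\mathbb{Z}}S_p$, and you must instead appeal to the pure point nature of the spectrum, which is itself part of Sabot's theorem rather than a consequence of the spectral theorem.

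The genuine gap is the one you flag yourself in Part 2 and then do not close. The whole nontrivial content of the theorem is the assertion that an $H_{<\infty>}$-eigenfunction with eigenvalue in $S$ vanishes at every point $\alpha^{-m}$, $m\geq 0$ (in Sabot's formulation, the generating eigenfunctions satisfy both Dirichlet and Neumann conditions at $x=1$ and extend by zero to the half-line, which is why every restriction $f_{k,p}|_{I_{<n>}}$ with $p\geq -n$ lands in the form domain $\mathcal{D}_{<n>}$). Your proposal reduces this to ``Sabot's decimation correspondence,'' i.e., to the very result being proved; saying that $\phi(\gamma^{-1}\lambda_k)\in D$ ``propagates the required vanishing'' via the propagator $\Gamma_{<n>,\lambda}$ names the correct objects but supplies no argument. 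What is actually needed is the identification of membership in $D=\{x+\delta^{-1}y=0\}$, through the trace of the form $a-\lambda m$ on the boundary set and the functional equation \eqref{InvariantCurve}, with the existence of such a doubly-degenerate boundary behavior at $x=1$; none of this can be extracted from the scaling identities alone. So as a self-contained proof the proposal is circular at its key step, although as a diagnosis of where the difficulty sits it is accurate, and the remaining parts are packaged correctly around it.
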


There are a number of fractals for which the decimation method has been established or explored.  The interested readers can consult the following references by Shima \cite{TS96}, Fukushima--Shima \cite{FS92}, Kigami--Lapidus \cite{KL01}, Strichartz \cite{RS06}, Bajorin \textit{et al.} (\cite{BCD08}, \cite{BC08}), Teplyaev (\cite{AT04}, \cite{AT07}) and Derfel \textit{et al.} \cite{DGV08} in the case of rational functions of a single complex variable, and by Sabot (\cite{CS98}--\cite{CS03}) in the significantly more general case of rational functions of several complex variables.

\vspace{3mm}

\subsection{The zeta function associated with the renormalization map}

We now introduce a multivariable analog of the polynomial zeta function of Def. \ref{PolyZeta}.
\begin{definition}
\label{MultiPolyZeta}
For Re(s) sufficiently large, we define the \emph{zeta function of the renormalization map} $\rho$ to be
\begin{equation}
\label{MultiPoly}
\zeta_{\rho}(s) = \sum_{p=0}^{\infty}\sum_{\{\lambda \in \mathbb{C}:\hspace{.5mm } \rho^p(\phi(\gamma^{-(p+1)}\lambda)) \in D\}} (\gamma^p \lambda)^{-\frac{s}{2}}.
\end{equation}
\end{definition}

We can now state our first result regarding the fractal Sturm--Liouville operator:

\begin{theorem}
\label{H0MultiPolyZeta}
The zeta function $\zeta_{\rho}(s)$ of the renormalization map $\rho$ is equal to the spectral zeta function of $H_{<0>}$ \emph{(}as given by the $n=0$ case of Prop. \ref{HnZetaFunction} below\emph{):} $\zeta_{\rho}(s)=\zeta_{H_{<0>}}(s)$. 
\end{theorem}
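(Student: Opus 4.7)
The plan is to show both zeta functions equal the same double Dirichlet series $\sum_{p \geq 0}\sum_{k \geq 1}(\gamma^p \lambda_k)^{-s/2}$ indexed over the generating set $S=\{\lambda_k\}_{k=1}^{\infty}$ and the non-negative powers of $\gamma$. The two sides come from different starting points — the spectrum of $H_{<0>}$ on one side, and a complex-dynamical condition involving iterates of $\rho$ on the other — so the proof will consist of matching these up.

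First, I would unpack $\zeta_{H_{<0>}}(s)$ using Thm.~\ref{Spectrum}. By that theorem, the spectrum of $H_{<0>}$ is $\bigcup_{p=0}^{\infty} S_p$ with $S_p=\gamma^p S$ and each eigenvalue simple. Since $S=\{\lambda_k\}_{k=1}^{\infty}\subset\mathbb{R}^+$ and $\gamma>1$, the eigenvalues of $H_{<0>}$ are precisely the numbers $\gamma^p \lambda_k$ with $p\geq 0$ and $k\geq 1$ (no collisions because all are positive and the families stratify by $p$, or one may simply invoke simplicity and absolute convergence for $\operatorname{Re}(s)$ large). Hence, from Def.~\ref{SpecZetaFunction},
\begin{equation*}
\zeta_{H_{<0>}}(s)=\sum_{p=0}^{\infty}\sum_{k=1}^{\infty}(\gamma^p \lambda_k)^{-s/2}.
\end{equation*}

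Next, I would rewrite $\zeta_\rho(s)$ by simplifying the constraint $\rho^p(\phi(\gamma^{-(p+1)}\lambda))\in D$ using the key functional equation \eqref{InvariantCurve} satisfied by the invariant curve, namely $\rho(\phi(\mu))=\phi(\gamma\mu)$. Iterating this $p$ times gives $\rho^p(\phi(\mu))=\phi(\gamma^p \mu)$, and applying this with $\mu=\gamma^{-(p+1)}\lambda$ yields
\begin{equation*}
\rho^p\!\bigl(\phi(\gamma^{-(p+1)}\lambda)\bigr)=\phi(\gamma^{-1}\lambda).
\end{equation*}
Thus the inner summation condition collapses, for every $p\geq 0$, to $\phi(\gamma^{-1}\lambda)\in D$, which by the definition \eqref{SetS} of the generating set is precisely $\lambda\in S$. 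Substituting $S=\{\lambda_k\}_{k=1}^{\infty}$ gives
\begin{equation*}
\zeta_\rho(s)=\sum_{p=0}^{\infty}\sum_{k=1}^{\infty}(\gamma^p \lambda_k)^{-s/2},
\end{equation*}
which matches the expression for $\zeta_{H_{<0>}}(s)$ obtained above.

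The bulk of the argument is essentially a bookkeeping reduction, with the real content being (i) Sabot's description of the spectrum in Thm.~\ref{Spectrum} and (ii) the functional equation \eqref{InvariantCurve}, which together make the two sums tautologically equal. The only minor subtlety worth flagging is convergence: the double series converges absolutely for $\operatorname{Re}(s)$ sufficiently large because $\gamma>1$ (so the $p$-sum is a geometric series in $\gamma^{-s/2}$) and because $\sum_k \lambda_k^{-s/2}$ converges for $\operatorname{Re}(s)$ large by standard Weyl-type asymptotics for the Dirichlet eigenvalues of $H_{<0>}$ under the hypothesis $\delta\leq 1$ (equivalently $\alpha\leq \tfrac12$) made in \S3.1, which guarantees a purely discrete spectrum with the appropriate growth. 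This justifies the interchange of the two sums and legitimizes the identification, completing the proof.
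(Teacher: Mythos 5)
Your proposal is correct and follows essentially the same route as the paper: the core step in both is iterating the functional equation $\rho(\phi(\lambda))=\phi(\gamma\lambda)$ to collapse the condition $\rho^p(\phi(\gamma^{-(p+1)}\lambda))\in D$ to $\phi(\gamma^{-1}\lambda)\in D$, i.e.\ $\lambda\in S$, and then matching the resulting double sum with the spectral description of $H_{<0>}$ from Thm.~\ref{Spectrum}. Your added remarks on convergence and absolute summability are a harmless (and slightly more careful) supplement to what the paper writes.
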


\begin{proof}
We have successively:
\[
\zeta_{\rho}(s) = \sum_{p=0}^{\infty}\sum_{\{\lambda \in \mathbb{C}: \hspace{.5mm }\rho^p(\phi(\gamma^{-(p+1)}\lambda)) \in D\}} (\gamma^p \lambda)^{-\frac{s}{2}} = \sum^{\infty}_{p=0} \sum^{\infty}_{\lambda \in S} (\gamma^p\lambda)^{-\frac{s}{2}} = \zeta_{H_{<0>}}(s).
\]
\noindent In order to justify the first equality, we show that the set $\{\lambda \in \mathbb{C}: \rho^p(\phi(\gamma^{-(p+1)}\lambda)) \in D\}$ is exactly equal to $S= \{\lambda \in \mathbb{C}: \phi(\gamma^{-1}\lambda) \in D\}$.  Recall from Eq.\eqref{InvariantCurve} the relation $\rho(\phi(\lambda)) = \phi(\gamma\lambda)$, for all $\lambda \in \mathbb{C}$.  After $p$ iterations, this equation becomes $\rho^p(\phi(\lambda)) = \phi(\gamma^p\lambda)$.  Therefore, we get $\rho^p(\phi(\gamma^{-(p+1)}\lambda)) = \phi(\gamma^p\gamma^{-p-1}\lambda) = \phi(\gamma^{-1}\lambda)$, for $p=0, 1, 2...$ 
\end{proof}

We have a sequence of operators $H_{<n>}=-\frac{d}{dm_{<n>}}\frac{d}{dx},$ starting with $H_{<0>}$  on $[0, 1]$, which converges to the Sturm--Liouville operator $H_{<\infty>}$ on $[0, \infty)$.  We will now consider the associated spectral zeta functions and their product formulas.  Recall that given an integer $n\geq 0$, the spectral zeta $\zeta_{H_{<n>}}(s)$ of $H_{<n>}$ on $[0, \alpha^{-n}]$ is $\zeta_{H_{<n>}}(s) = \sum_{\lambda \in S} \sum_{p=-n}^{\infty} (\gamma^p\lambda)^{-\frac{s}{2}}$.  Then, a simple computation yields the following result:

\begin{proposition}
\label{HnZetaFunction}
For $n \geq 0$ and $Re(s)$ sufficiently large, we have
\begin{equation}
\label{HnZeta}
\zeta_{H_{<n>}}(s)=\frac{(\gamma^n)^{\frac{s}{2}}}{1-\gamma^{-\frac{s}{2}}} \zeta_S(s),
\end{equation}
where $\zeta_S(s)$ is the \emph{geometric zeta function of the generating set} $S$.  Namely, $\zeta_S(s):= \sum_{j=1}^{\infty}(\lambda_j)^{-\frac{s}{2}}$ \emph{(}for Re\emph{(}s\emph{)} large enough\emph{)} or is given by its meromorphic continuation thereof.\end{proposition}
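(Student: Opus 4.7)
The plan is to treat the identity as a direct Fubini-type computation followed by a geometric series summation, and then invoke meromorphic continuation to extend the resulting identity beyond the initial half-plane of absolute convergence.

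First I would fix $\mathrm{Re}(s)$ large enough that the defining double series
\[
\zeta_{H_{<n>}}(s) = \sum_{\lambda \in S}\sum_{p=-n}^{\infty}(\gamma^p\lambda)^{-s/2}
\]
converges absolutely. Since $\gamma>1$, the inner $p$-series converges provided $\mathrm{Re}(s)>0$, and since $S=\{\lambda_j\}_{j=1}^\infty$ is the generating set of the spectrum, $\zeta_S(s)=\sum_j\lambda_j^{-s/2}$ converges for $\mathrm{Re}(s)$ larger than the abscissa of convergence of $\zeta_S$; taking $\mathrm{Re}(s)$ larger than both thresholds secures absolute convergence and justifies Fubini.

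Next I would factor $(\gamma^p\lambda)^{-s/2}=\gamma^{-ps/2}\lambda^{-s/2}$ and interchange the two summations, obtaining
\[
\zeta_{H_{<n>}}(s)=\Bigl(\sum_{p=-n}^{\infty}\gamma^{-ps/2}\Bigr)\Bigl(\sum_{\lambda\in S}\lambda^{-s/2}\Bigr)=\Bigl(\sum_{p=-n}^{\infty}\gamma^{-ps/2}\Bigr)\zeta_S(s).
\]
Reindexing the $p$-sum by $q=p+n$ turns it into $\sum_{q=0}^{\infty}\gamma^{(n-q)s/2}=\gamma^{ns/2}\sum_{q=0}^{\infty}\gamma^{-qs/2}$, and summing the resulting geometric series (valid since $|\gamma^{-s/2}|<1$ when $\mathrm{Re}(s)>0$) yields $\gamma^{ns/2}/(1-\gamma^{-s/2})$, which is exactly the claimed prefactor.

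Finally, I would note that both sides of Eq.~\eqref{HnZeta} are meromorphic functions of $s$: the right-hand side is the product of $\zeta_S(s)$ (which is assumed or known to admit a meromorphic continuation) and the explicit rational function $\gamma^{ns/2}/(1-\gamma^{-s/2})$. Hence the identity, established in a right half-plane by the computation above, extends to the common domain of meromorphy by the identity principle. The only subtle point — and the one I would take most care over — is the interchange of summations: it is routine once absolute convergence is in place, but one must record the exact half-plane on which the manipulation is licit before invoking the continuation. Everything else is a bookkeeping exercise with the geometric series.
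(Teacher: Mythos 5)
Your proposal is correct and matches the paper's intent: the paper merely remarks that ``a simple computation yields the result,'' and the computation it has in mind is exactly your factorization of $(\gamma^p\lambda)^{-s/2}$, interchange of the two absolutely convergent sums, and summation of the geometric series $\sum_{p=-n}^{\infty}\gamma^{-ps/2}=\gamma^{ns/2}/(1-\gamma^{-s/2})$. Your added care about the half-plane of absolute convergence and the subsequent meromorphic continuation is consistent with the statement's phrasing and adds nothing that conflicts with the paper.
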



In the case of the operator $H_{<\infty>}$, the geometric part of the product formula for the spectral zeta function $\zeta_{H_{<\infty>}}$ is given by the Dirac hyperfunction $\delta_{\mathbb{T}}$ (see \S 2.1).

\begin{theorem}
\label{HZetaFunction}
The spectral zeta function $\zeta_{H_{<\infty>}}$ is factorized as follows\emph{:}
\begin{equation}
\label{HZeta}
\zeta_{H_{<\infty>}}(s)= \delta_{\mathbb{T}}(\gamma^{-\frac{s}{2}}) \cdot \zeta_S(s).
\end{equation}
\end{theorem}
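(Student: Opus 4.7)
The plan is to reduce Theorem \ref{HZetaFunction} to a direct computation using Sabot's description of the spectrum together with Lemma \ref{DeltaHyperfunction}. First I would invoke Thm. \ref{Spectrum} to write the spectrum of $H_{<\infty>}$ as $\bigcup_{p=-\infty}^{\infty} S_p$, where $S_p = \gamma^p S$ and $S = \{\lambda_k\}_{k=1}^\infty$ is the generating set listed with multiplicities; since each eigenvalue of $H_{<\infty>}$ is simple (again by Thm. \ref{Spectrum}), the spectral zeta function admits the double-sum representation
\begin{equation*}
\zeta_{H_{<\infty>}}(s) \;=\; \sum_{p=-\infty}^{\infty} \sum_{k=1}^{\infty} (\gamma^p \lambda_k)^{-s/2}.
\end{equation*}

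Next I would separate the $p$- and $k$-dependence by formally factoring
\begin{equation*}
(\gamma^p \lambda_k)^{-s/2} \;=\; (\gamma^p)^{-s/2}\, \lambda_k^{-s/2},
\end{equation*}
so that the double sum factors as $\bigl(\sum_{p=-\infty}^{\infty}(\gamma^p)^{-s/2}\bigr) \cdot \bigl(\sum_{k=1}^{\infty}\lambda_k^{-s/2}\bigr)$. The second factor is, by definition, the geometric zeta function $\zeta_S(s)$ of the generating set (or its meromorphic continuation), which is well-defined for $\mathrm{Re}(s)$ large enough, as in Prop. \ref{HnZetaFunction}.

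The first factor is the bi-infinite geometric series which, taken literally, diverges on every vertical line in $\mathbb{C}$; this is where the hyperfunction interpretation is essential, and it is the only delicate step. I would appeal to Lemma \ref{DeltaHyperfunction}, which identifies precisely this series with the Dirac hyperfunction $\delta_{\mathbb{T}}(\gamma^{-s/2})$ on the unit circle, viewed via the change of variable $w = \gamma^{-s/2}$ as a pair of holomorphic functions on the half-planes $\mathrm{Re}(s)>0$ and $\mathrm{Re}(s)<0$ (corresponding to $|w|<1$ and $|w|>1$ respectively). Combining the two factors yields exactly Eq. \eqref{HZeta}.

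The main obstacle is conceptual rather than computational: the splitting of the double sum mixes a convergent factor with a divergent bi-infinite geometric series, so the factorization \eqref{HZeta} must be interpreted in the hyperfunction sense, with the product of the holomorphic function $\zeta_S(s)$ and the hyperfunction $\delta_{\mathbb{T}}(\gamma^{-s/2})$ defined as in \S 2.1 by the rule $h\cdot[F_+,F_-] = [hF_+, hF_-]$. Once this interpretation is in place, the identity is an immediate consequence of Thm. \ref{Spectrum} and Lemma \ref{DeltaHyperfunction}, paralleling the argument used for Thm. \ref{infiniteSGzeta} in the one-variable case (with $\gamma$ playing the role that $5$ played there).
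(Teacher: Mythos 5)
Your proposal is correct and follows the paper's proof exactly: the paper likewise writes $\zeta_{H_{<\infty>}}(s)$ as the double sum over $\bigcup_{p=-\infty}^{\infty}S_p$ via Thm.~\ref{Spectrum}, factors it as $\sum_{p=-\infty}^{\infty}(\gamma^p)^{-\frac{s}{2}}\sum_{j=1}^{\infty}(\lambda_j)^{-\frac{s}{2}}$, and invokes Lemma~\ref{DeltaHyperfunction} to identify the divergent bi-infinite factor with $\delta_{\mathbb{T}}(\gamma^{-\frac{s}{2}})$. Your additional remarks on interpreting the product of the hyperfunction with the holomorphic factor $\zeta_S(s)$ merely make explicit what the paper leaves implicit.
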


\begin{proof}
In light of Lemma \ref{DeltaHyperfunction} (and since $\gamma >1$), we have
\[
\zeta_{H_{<\infty>}}(s) = \sum_{p=-\infty}^{\infty}(\gamma^p)^{-\frac{s}{2}} \sum_{j=1}^{\infty} (\lambda_j)^{-\frac{s}{2}} =\delta_{\mathbb{T}}(\gamma^{-\frac{s}{2}}) \cdot \zeta_S(s).\qedhere
\]
\end{proof}

Next,  we revisit and extend some of the earlier results obtained in \cite{AT07}.  More precisely, we show that the zeta function associated with the renormalization map coincides with the Riemann zeta function for a special value of $\alpha$.

\subsubsection{The case $\alpha= \frac{1}{2}$: Connection with the Riemann zeta function}

When $\alpha= \frac{1}{2}$, the self-similar measure $m$ coincides with Lebesgue measure on $[0,1]$ and hence, $H=H_{<0>}$ coincides with the usual Dirichlet Laplacian on the unit interval $[0, 1]$. 

\begin{theorem}
\label{RiemannZetaFunction}
When $\alpha=\frac{1}{2}$, the Riemann zeta function $\zeta$ is equal \emph{(}up to a trivial factor\emph{)} to the zeta function $\zeta_{\rho}$ associated with the renormalization map $\rho$ on $\mathbb{P}^2(\mathbb{C})$.  More specifically, 
\begin{equation}
\label{RiemannZeta}
\zeta(s) =\pi^s \zeta_{\rho}(s)= \frac{\pi^s}{1-2^{-s}} \mbox{ }\zeta_S(s),
\end{equation}
where $\zeta_{\rho}$ is given by Def. \ref{MultiPolyZeta} and the polynomial map $\rho: \mathbb{P}^2(\mathbb{C}) \rightarrow \mathbb{P}^2(\mathbb{C})$ is given by Eq. \eqref{RenormalizationMap} with $\alpha=\frac{1}{2}$ \emph{(}and hence, in light of Eq. \eqref{constants}, with $\delta=1$ and $\gamma=4$\emph{):}
\begin{equation}
\label{RenormalMap}
\rho([x,y,z])= [x(x+y)-z^2, y(x+y)-z^2, z^2].
\end{equation}

\end{theorem}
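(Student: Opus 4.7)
The plan is to reduce Theorem \ref{RiemannZetaFunction} to a short computation by leveraging Theorem \ref{H0MultiPolyZeta} and Proposition \ref{HnZetaFunction}. The key observation is that when $\alpha = 1/2$, the self-similar measure $m$ on $[0,1]$ coincides with Lebesgue measure, so $H_{<0>} = -d^2/dx^2$ is simply the classical Dirichlet Laplacian on the unit interval, whose eigenvalues are the well-known $\kappa_n = (n\pi)^2$ for $n = 1, 2, \ldots$.

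For the first equality, I would compute the spectral zeta function of $H_{<0>}$ directly from Definition \ref{SpecZetaFunction}:
\[
\zeta_{H_{<0>}}(s) = \sum_{n=1}^\infty \bigl((n\pi)^2\bigr)^{-s/2} = \pi^{-s}\zeta(s).
\]
Combining this with Theorem \ref{H0MultiPolyZeta}, which asserts $\zeta_\rho(s) = \zeta_{H_{<0>}}(s)$, immediately yields $\zeta(s) = \pi^s \zeta_\rho(s)$.

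For the second equality, I would apply Proposition \ref{HnZetaFunction} at $n = 0$. Since $\alpha = 1/2$ together with Eq. \eqref{constants} yields $\gamma = 4$, we have $\gamma^{-s/2} = 2^{-s}$, and the proposition gives
\[
\zeta_{H_{<0>}}(s) = \frac{1}{1 - 2^{-s}}\,\zeta_S(s).
\]
Multiplying both sides by $\pi^s$ and substituting $\pi^s \zeta_{H_{<0>}}(s) = \zeta(s)$ from the previous step produces $\zeta(s) = \frac{\pi^s}{1 - 2^{-s}}\,\zeta_S(s)$, which completes the proof together with the observation that the formula \eqref{RenormalMap} for $\rho$ is just Eq. \eqref{RenormalizationMap} specialized to $\delta = 1$.

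No step is genuinely an obstacle; the argument is essentially an assembly of already-established facts applied to a transparent special case. The one sanity check worth carrying out is the identification of the generating set $S$: Theorem \ref{Spectrum} gives $\sigma(H_{<0>}) = \bigcup_{p=0}^\infty 4^p S$, while we know independently that $\sigma(H_{<0>}) = \{(n\pi)^2 : n \geq 1\}$. The unique decomposition $n = 2^p (2k-1)$ of each positive integer forces $S = \{((2k-1)\pi)^2 : k \geq 1\}$, whence $\zeta_S(s) = \pi^{-s}(1 - 2^{-s})\zeta(s)$. Reinserted into \eqref{RiemannZeta}, this reproduces $\zeta(s)$, confirming internal consistency and providing a concrete handle on $\zeta_S$ in this special case.
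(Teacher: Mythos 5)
Your proposal is correct and follows essentially the same route as the paper: identify $H_{<0>}$ with the classical Dirichlet Laplacian when $\alpha=\tfrac12$ to get $\zeta_{H_{<0>}}(s)=\pi^{-s}\zeta(s)$, then invoke Theorem \ref{H0MultiPolyZeta} and Proposition \ref{HnZetaFunction} with $\gamma=4$. Your closing sanity check identifying $S=\{((2k-1)\pi)^2:k\geq 1\}$ is a nice addition not present in the paper, but the core argument is the same.
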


\begin{proof}
First, by Thm. \ref{H0MultiPolyZeta} and Prop. \ref{HnZetaFunction} (with n=0), we have (since $\gamma=4$) $\zeta_{H_{<0>}}(s) =\zeta_{\rho}(s)=(1-2^{-s})^{-1} \mbox{ } \zeta_S(s)$.

Next, we recall that the eigenvalues of the Dirichlet Laplacian $L=-\frac{d^2}{dx^2}$  on $[0, 1]$ are $\kappa_j= \pi^2 j^2$, for $j=1,2,...$.  Thus, in light of Def. \ref{SpecZetaFunction}, the associated spectral zeta function is $\zeta_L(s)=\sum_{j=1}^{\infty}(\pi^2j^2)^{-\frac{s}{2}} = \pi^{-s} \zeta(s)$.  Note that in the present case, the Sturm--Liouville operator $H_{<0>}$ and the Dirichlet Laplacian $L$ on $[0, 1]$ coincide; hence, the corresponding spectral zeta functions are equal: $\zeta_{H_{<0>}}(s) = \zeta_L(s)$.  In light of Thm. \ref{H0MultiPolyZeta}, $\zeta_{H_{<0>}}(s)=\zeta_{\rho}(s)$ and we therefore obtain the relation $\zeta(s) = \pi^s \zeta_{\rho}(s)$, with $\zeta _{\rho}$ given by Eq. \eqref{MultiPoly} and $\rho$ defined by Eq. \eqref{RenormalMap}, as desired.
\end{proof}

\begin{remark}
\label{TeplyaevRemark}
This is an extension to several complex variables of A. Teplyaev's result \emph{\cite{AT07}} discussed in Remark \ref{TeplyaevRemark1} above.
\end{remark}

\begin{remark}
\label{LapidusRemark}
Still assuming that $\alpha=\frac{1}{2}$ and since Eq. \eqref{RiemannZeta} implies that $\zeta_{\rho}(s)=\pi^{-s} \zeta(s)$, we deduce that the factorization formula \eqref{La1} for the spectral zeta function $\zeta_{L}(s)=\zeta_{sp}(s)$ of a fractal string $\mathcal{L}$ can be rewritten as follows\emph{:}
\begin{equation}
\label{La3}
\zeta_{L}(s)=\zeta_{\rho}(s) \cdot \zeta_{\mathcal{L}}(s),
\end{equation}
in agreement with Eq. \eqref{La2}.  \emph{(}Compare with \emph{\cite{Lapidus92}}, \emph{\cite{Lapidus93}} and \emph{\cite{LvanF06}}, Thm. 1.19.\emph{)}  Here, $\rho$ is the homogeneous quadratic polynomial on $\mathbb{C}^3$ \emph{(}or rather, on $\mathbb{P}^2(\mathbb{C})$\emph{)} given by Eq. \eqref{RenormalMap}.

\end{remark}

\section{Concluding Remarks}
We expect to obtain a similar product structure for other fractal Laplacians on Sabot decimable self-similar fractals.  Our immediate aim would be to apply the obtained results to the modified Koch curve for which the decimation method is well established with a rational map of one complex variable.  Another goal would be to analyze a large class of finitely-ramified self-similar sets with (possibly random) blow-ups; a special case of that is the infinite Sierpinski gasket, which corresponds to the blow-up of the classic Sierpinski gasket and was studied in the deterministic case in \S 2.2.  A study of random infinite gaskets and other self-similar fractals, as in \cite{CS03} but along the lines of \S 2.2 and \S 3.2, remains to be carried out.

Using Sabot's multivariable extension of the decimation method, one should be able to obtain an analogous factorization formula for such fractals.  Such a generalization would also enable us to better understand the nature of the spectrum of the Laplacian and to formulate and possibly solve suitable direct and inverse spectral problems in this context.  In a more familiar language, and appropriately interpreted, this would enable us in certain situations to ``hear the shape of a fractal drum".  (See, e.g., \cite{Kigam93}--\cite{LvanF06}.)
 
Furthermore, it would be interesting, both mathematically and physically, to obtain related results in the situation where the Laplacian under investigation has a continuous spectrum or, more generally, a combination of continuous and discrete spectra.  We would then have to work with a suitably defined notion of density of states, both at the geometric and spectral levels.  (Compare, e.g., \cite{LvanF06}, \S6.3.1 and \cite{Kigam93}, \cite{CS01}.)

Moreover, as we have seen, Sabot discovered in \cite{CS98}--\cite{CS03} some striking relationships between the spectral properties of certain differential operators on fractals and the iteration of rational maps of several complex variables.  The further study of the connections between these rational maps and the spectral zeta functions of fractal Laplacians is one of the main goals of future research on this topic and should lead to a deeper exploration of complex dynamics in higher dimensions, in relation to the spectral theory of fractal drums.  It may also have applications to condensed matter physics (\cite{AO82}, \cite{B79}, \cite{HHW87}, \cite{NY03}--\cite{RT83}), particularly, the study of random and fractal media.

\vspace{7mm}
\noindent \thanks{\textbf {Acknowledgements}.}
We wish to thank Michael Maroun for a helpful conversation about hyperfunctions.  The research of Michel Lapidus was  partially supported by the National Science Foundation under the research grants DMS-0707524 and DMS-1107750.  Michel Lapidus would like to thank the Institut des Hautes Etudes Scientifiques (IHES) where he was a visiting professor while this paper was completed.

\nocite{JS06}


\begin{thebibliography}{40}

\bibitem{AO82}
Alexander, S., Orbach, R.,
Density of states on fractals: fractons,
\emph{J. Physique Lettres} \textbf{43} (1982).



\bibitem{BCD08}
Bajorin, N., Chen, T., Dagan, A., Emmons, C., Hussein, M., Khalil, M., Mody, P., Steinhurst, B., Teplyaev, A.,
Vibration modes of 3n-gaskets and other fractals,
\emph{J. Phys. A: Math. Theor.} \textbf{41} (2008), 015101 (21pp).

\bibitem{BC08}
Bajorin, N., Chen, T., Dagan, A., Emmons, C., Hussein, M., Khalil, M., Mody, P., Steinhurst, B., Teplyaev, A.,
Vibration spectra of finitely ramified, symmetric fractals,
\emph{Fractals} \textbf{16} (2008),
243-258.


\bibitem{B91}
Barlow, M. T., 
Random walks and diffusion on fractals, in: \emph{Proc. Intern. Congress Math.}
(Kyoto, 1990), vol. II, 
Springer-Verlag, Berlin and New York,
1991, pp. 1025--1035.





\bibitem{B79}
Berry, M. V.,
Distribution of modes in fractal resonators, in: \emph{Structural Stability in Physics},
W. G\"{u}ttinger and H. Eikemeier (eds.), Springer-Verlag, Berlin,
1979, pp. 51--53.


\bibitem{DGV08}
Derfel, G., Grabner, P., Vogl, F.,
The zeta function of the Laplacian on certain fractals,
\emph{Trans. Amer. Math. Soc.} \textbf{360}
(2008),
881--897.


\bibitem{JF96}
Fornaess, J. E., 
\emph{Dynamics in Several Complex Variables},
CBMS Conf. Series in Math., vol. 87, Amer. Math. Soc., Providence, RI, 
1996.



\bibitem{FS94}
Fornaess, J. E.,  Sibony, N., 
\emph{Complex dynamics in higher dimension I},
Ast$\acute{e}$risque \textbf{222}
(1994),
201--231.


\bibitem{UF03}
Freiberg, U.,
A survey on measure geometric Laplacians on Cantor like sets,
\emph{Arabian J. Sci. Eng.} \textbf{28}
(2003),
189--198.



\bibitem{F87}
Fujita, T., 
A fractional dimension, self-similarity and a generalized diffusion operator, in:\emph{ Probabilistic Methods in 
Mathematical Physics} (Katata and Kyoto, 1985),
K. Ito and N. Ikeda (eds.), 
Proc. Taniguchi Intern. Symp.,
Toyko: Kinokuniya,
1987,
pp. 83--90.





\bibitem{FS92}
Fukushima, M., Shima, T.,
On a spectral analysis for the Sierpinski gasket,
\emph{Potential Analysis} \textbf{1}
(1992),
1--35.


\bibitem{UG10}
Graf, U.,
\emph{Introduction to Hyperfunctions and Their Integral Transforms: An applied and computational approach},
Birkh\"{a}user,
Basel, 2010.



\bibitem{HHW87}
Hattori, K., Hattori, T., Watanabe, H.,
Gaussian field theories on general networks and the spectral dimensions,
\emph{Progr. Theoret. Phys. Suppl.} \textbf{92} (1987),
108-143.


\bibitem{Kigami93}
Kigami, J.,
Harmonic calculus on p.c.f self-similar sets,
\emph{Trans. Amer. Math. Soc.} \textbf{335}
(1993),
721--755.


\bibitem{Kigami01}
Kigami, J.,
\emph{Analysis on Fractals},
Cambridge Tracts in Mathematics, vol. 143, Cambridge University Press, Cambridge, 
2001.

\bibitem{Kigam93}
Kigami, J., Lapidus, M. L.,
Weyl's problem for the spectral distribution of Laplacians on p.c.f self-similar fractals,
\emph{Commun. Math. Phys.} \textbf{158}(1)
(1993),
93--125.



\bibitem{KL01}
Kigami, J., Lapidus, M. L.,
Self-similarity of volume measures for Laplacians on p.c.f. self-similar fractals, 
\emph{Commun. Math. Phys.} \textbf{217}
(2001),
165--180.




\bibitem{Lapidus91}
Lapidus, M. L.,
Fractal drum, inverse spectral problems for elliptic operators and a partial resolution of the Weyl--Berry conjecture,
\emph{Trans. Amer. Math. Soc.} \textbf{325}
(1991),
465--529.


\bibitem{Lapidus92}
Lapidus, M. L.,
Spectral and fractal geometry: From the Weyl--Berry conjecture for the vibrations of fractals drums to the Riemann zeta function, in:
\emph{Differential Equations and Mathematical Physics}
(Birmingham, AL, 1990), C. Bennewitz (ed.), Math. Sci. Engrg., vol. 186, Academic Press, Boston, MA, 1992, pp. 151--181.

\bibitem{Lapidus93}
Lapidus, M. L.,
\emph{Vibrations of fractal drums, the Riemann hypothesis, waves in fractal media, and the Weyl--Berry conjecture}, in:
\emph{Ordinary and Partial Differential Equations},
vol. IV, B. D. Sleeman and R. J. Jarvis (eds.),
Pitman Research Notes in Math. Series, vol. 289, Longman Scientific and Technical, London,
1993,
pp. 126--209.

\bibitem{Lapidus95}
Lapidus, M. L.,
\emph{Fractals and vibrations: Can you hear the shape of a fractal drum?},
Fractals \textbf{3}, No. 4
(1995),
725--736.
(Special issue in honor of Benoit B. Mandelbrot's 70th birthday.)



\bibitem{LM95}
Lapidus, M. L., Maier, H.,
The Riemann hypothesis and inverse spectral problems for fractal strings,
\emph{J. London Math. Soc.} (2)\textbf{52}
(1995),
15--34.


\bibitem{LP93}
Lapidus, M. L., Pomerance, C.,
The Riemann zeta-function and the one-dimensional Weyl--Berry conjecture for fractal drums,
\emph{Proc. London Math. Soc.} (3) \textbf{66}
(1993),
41--69.

\bibitem{LvanF06}
Lapidus, M. L.,  van Frankenhuijsen, M.,
\emph{Fractal Geometry, Complex Dimension and Zeta Functions: Geometry and spectra of fractal strings},
Springer Monographs in Mathematics, Springer, New York,
2006. (Second rev. and enl. ed. in press, 2012.)





\bibitem{MM76} Morimoto, M., 
\emph{An Introduction to Sato's Hyperfunctions}, 
Translations of Mathematical Monographs, Amer. Math. Soc., Providence, R.I., 1993. 
(English transl. of Kyoritsu--Shuppan, 1976.)



\bibitem{NY03}
Nakayama, T., Yakubo, K.,
\emph{Fractal Concepts in Condensed Matter Physics},
Springer--Verlag,
2003.



\bibitem{AO96}
Olemskoi, A., 
\emph{Fractals in Condensed Matter Physics},
I. Khalatnikov (ed.),
\emph{Phys. Rev.} vol. 18,  Part I,
Gordon \& Breach, London, 1996.





\bibitem{RR84}
Rammal, R.,
Spectrum of harmonic excitations on fractals,
\emph{J. de Physique} \textbf{45}
(1984),
191--206.


\bibitem{RT83}
Rammal, R., Toulouse, G.,
Random walks on fractal structures and percolation cluster,
\emph{J. Physique Lettres} \textbf{44}
(1983),
L13--L22.







\bibitem{CS98}
Sabot, C.,
Density of states of diffusions on self-similar sets and holomorphic dynamics in $\mathbb{P}^k$: the example of the interval $[0, 1]$,
\emph{C. R. Acad. Sci. Paris S\'er. I: Math.} \textbf{327}
(1998),
359--364.


\bibitem{CS01}
Sabot, C.,
Integrated density of states of self-similar Sturm--Liouville operators and holomorphic dynamics in higher dimension,
\emph{Ann. Inst. H. Poincar$\acute{e}$ Probab. Statist.} \textbf{37}
(2001),
275--311.




\bibitem{CS05}
Sabot, C.,
Spectral analysis of a self-similar Sturm--Liouville operator,
\emph{Indiana Univ. Math. J.} \textbf{54}
(2005),
645--668.


\bibitem{CS03} Sabot, C., 
\emph{Spectral properties of self-similar lattices and iteration of rational maps}, M\'emoirs Soc.
Math. France (New Series), No. 92, 2003, 1--104.

\bibitem{MS58} Sato, M.,
Theory of hyperfunctions, 
\emph{Sugaku} \textbf{10}
(1958),
1--27. (Japanese)




\bibitem{MS59} Sato, M.,
Theory of hyperfunctions I \& II, 
\emph{J. Fac. Sci. Univ. Tokyo}, Sec. IA, \textbf{8}
(1959),
139--193 \& \textbf{8} (1960), 387--437.




\bibitem{TS96} Shima, T., 
On eigenvalue problems for Laplacians on p.c.f. self-similar sets, 
\emph{Japan J. Indust. Appl. Math.} \textbf{13}
(1996),
1--23.



\bibitem{RS06}
Strichartz, R. S.,
\emph{Differential Equations on Fractals: A tutorial},
Princeton University Press, Princeton, 
2006.


\bibitem{YT87}
Taguchi, Y.,
A characterization of the space of Sato-hyperfunctions on the unit circle,
\emph{Hiroshima Math. J.} \textbf{17} (1987),
41--46.

\bibitem{AT98}
Teplyaev, A.,
Spectral analysis on infinite Sierpinski gaskets,
\emph{J. Funct. Anal.} \textbf{159} (1998),
537--567.

\bibitem{AT04}
Teplyaev A.,
\emph{Spectral zeta function of symmetric fractals}, in:
\emph{Fractal Geometry and Stochastics III},  Progress in Probability, vol. 57, Birkh\"{a}user-Verlag, Basel, 2004,
pp. 245--262.

\bibitem{AT07}
Teplyaev, A.,
Spectral zeta functions of fractals and the complex dynamics of polynomials,
\emph{Trans. Amer. Math. Soc.} \textbf{359} (2007),
4339--4358.



\end{thebibliography}
\end{document}